\DeclareMathOperator{\exposure}{exposure}
\theoremstyle{definition}
\newcommand\dottedcircle{\tikz \draw [line cap=round, line width=0.25ex, dash pattern=on 0pt off 2pt] (0,0) circle [radius=0.75ex];}
\newcommand{\added}[1]{#1}
\colorlet{DarkRed}{red!70!black}
\newtheorem{game}{Game}[section]
\patchcmd{\maketitle}{\@copyrightpermission}{
   \begin{minipage}{0.3\columnwidth}
     \href{https://creativecommons.org/licenses/by/4.0/}{\includegraphics[width=0.90\textwidth]{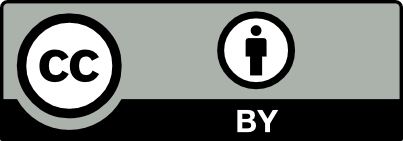}}
   \end{minipage}\hfill
   \begin{minipage}{0.7\columnwidth}
     \href{https://creativecommons.org/licenses/by/4.0/}{This work is licensed under a Creative Commons Attribution International 4.0 License.}
   \end{minipage}

   \vspace{5pt}
}{}{}
\begin{document}

%\title{Improved Privacy Attacks using Poisoning}
%\title{Poisoning Machine Learning Models to Reveal Their Secrets}
\title{Truth Serum: Poisoning Machine Learning Models\\ to Reveal Their Secrets}
%title{Veritaserum: Poisoning Machine Learning Models to Reveal Their Secrets}

%\iffalse
\author{Florian Tram\`er}
\affiliation{ETH Zürich}
\authornote{Authors ordered reverse alphabetically}
\authornote{Work done while the author was at Google}

\author{Reza Shokri}
\affiliation{National University of Singapore}

\author{Ayrton San Joaquin}
\affiliation{Yale-NUS College}

\author{Hoang Le}
\affiliation{Oregon State University}

\author{Matthew Jagielski}
\affiliation{Google}

\author{Sanghyun Hong}
\affiliation{Oregon State University}

\author{Nicholas Carlini}
\affiliation{Google}
%\fi

\iffalse
\newcommand{\tsc}[1]{\textsuperscript{#1}} %shorthand for superscripts
\author{Florian Tram\`er\tsc{1$\dagger$} \hspace{4em}Reza Shokri\tsc{2} \hspace{4em}Ayrton San Joaquin\tsc{3} \hspace{4em}Hoang Le\tsc{4}}
\thanks{\tsc{$\dagger$}Authors ordered reverse-alphabetically}
\author{\vskip .1cm}
\author{Matthew Jagielski\tsc{1} \hspace{4em}Sanghyun Hong\tsc{4} \hspace{4em}Nicholas Carlini\tsc{1}}
\affiliation{
  \institution{\vskip .2cm}%add some spacing if needed
  \institution{\tsc{1}Google\hspace{2em}
               \tsc{2}National University of Singapore\hspace{2em}
               \tsc{3}Yale-NUS College\hspace{2em}
               \tsc{4}Oregon State University\hspace{2em}
               }
    \institution{\vskip .2cm}%add some spacing if needed
}

%\makeatletter
%\def\@authorfont{\normalfont\Large} %change font. don't do this.
%\def\@affiliationfont{\normalfont\large} %change font. don't do this.
%\def\author@bx@sep{75pt}
%\makeatother
\fi

%\renewcommand{\shortauthors}{Tramèr F., Shokri S., San Joaquin A., Le H., Jagielski M., Hong S., Carlini N.}
\renewcommand{\shortauthors}{Florian Tramèr et al.}
\renewcommand{\shorttitle}{Truth Serum: Poisoning Machine Learning Models to Reveal Their Secrets}

\begin{abstract}
We introduce a new class of attacks on machine learning models.
We show that an adversary who can poison a training dataset can cause models trained on this
dataset to leak significant private details of training points belonging to other parties.
Our active inference attacks connect two independent lines of work targeting the \textit{integrity} and \textit{privacy} of machine learning training data.

Our attacks are effective across membership inference, attribute inference, and data extraction. For example, our targeted attacks can poison \textless$0.1\%$ of the training dataset to boost the performance of inference
attacks by 1 to 2 orders of magnitude.
Further, an adversary who controls a significant fraction of the training data (e.g., 50\%) can launch untargeted attacks that enable $8\times$ more precise inference on \emph{all} other users' otherwise-private data points.

Our results cast doubts on the relevance of cryptographic privacy guarantees in multiparty computation protocols for machine learning, if parties can arbitrarily select their share of training data.
\end{abstract}

\begin{CCSXML}
<ccs2012>
   <concept>
       <concept_id>10002978.10003022</concept_id>
        <concept_desc>Security and privacy~Software and application security</concept_desc>
        <concept_significance>500</concept_significance>
    </concept>
    <concept>
       <concept_id>10010147.10010257</concept_id>
       <concept_desc>Computing methodologies~Machine learning</concept_desc>
       <concept_significance>500</concept_significance>
    </concept>
</ccs2012>
\end{CCSXML}

\ccsdesc{Computing methodologies~Machine learning}
\ccsdesc{Security and privacy~Software and application security}

\keywords{machine learning, poisoning, privacy, membership inference}

\maketitle

\section{Introduction}
A central tenet of computer security is that one cannot obtain any \emph{privacy} without \emph{integrity}~\cite[Chapter 9]{bonehgraduate}.
In cryptography, for example, an adversary who can \emph{modify} a ciphertext, before it is sent to the intended recipient, might be able to leverage this ability to actually \emph{decrypt} the ciphertext.
%
%In this paper we show that a similar consideration applies to the training of machine learning models.
In this paper, we show that this same vulnerability applies to the training of machine learning models.

Currently, there are two long and independent lines of work that study attacks on the integrity and privacy of training data in machine learning (ML). \emph{Data poisoning} attacks~\cite{biggio2012poisoning} target the integrity of an ML model's data collection process to degrade model performance at inference time---either indiscriminately~\cite{biggio2012poisoning,charikar2017learning,jagielski2018manipulating, fowl2021adversarial, munoz2017towards} or on targeted examples~\cite{bhagoji2019analyzing, turner2019label, shafahi2018poison, bagdasaryan2020backdoor, geiping2020witches, liu2017trojaning}. 
Then, separately, privacy attacks such as \emph{membership inference}~\cite{shokri2016membership}, \emph{attribute inference}~\cite{fredrikson2015model, yeom2018privacy} or \emph{data extraction}~\cite{carlini2019secret, carlini2020extracting} aim to infer private information about the model's training set by interacting with a trained model, or by actively participating in the training process~\cite{melis2019exploiting, nasr2019comprehensive}.

\begin{figure}
    \centering
    \begin{subfigure}[b]{0.375\columnwidth}
         \centering
         \includegraphics[height=2.9cm]{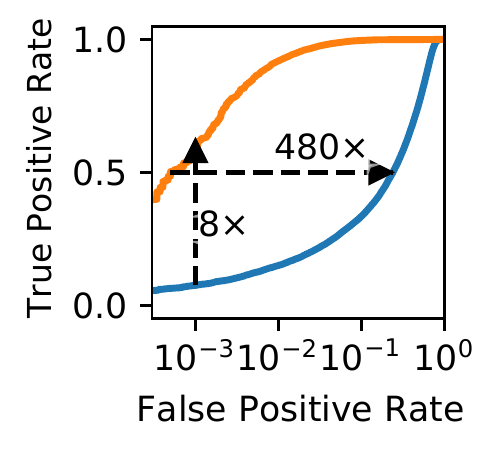}
         \vspace{-1.5em}
         \caption{\footnotesize{Membership Inference}}
         \label{fig:mi}
     \end{subfigure}
     \begin{subfigure}[b]{0.3\columnwidth}
         \centering
         \hspace*{-0.15in}
         \includegraphics[height=3.44cm]{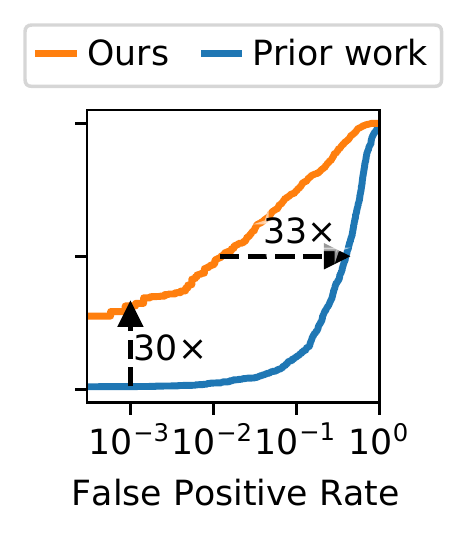}
         \vspace{-1.5em}
         \caption{\footnotesize{Attribute Inference}}
         \label{fig:ai}
     \end{subfigure}
     \begin{subfigure}[b]{0.305\columnwidth}
         \centering
         \hspace*{-0.05in}
         \includegraphics[height=2.9cm]{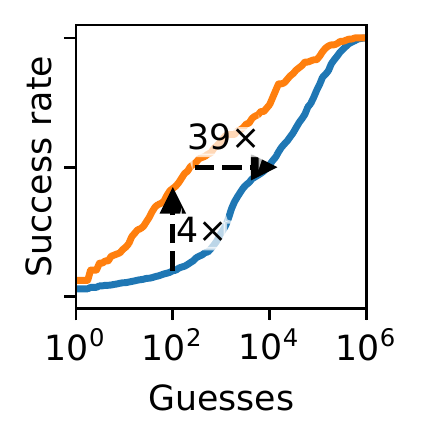}
         \vspace{-1.5em}
         \caption{\footnotesize{Canary Extraction}}
         \label{fig:lm}
     \end{subfigure}
     \vspace{-1em}
     \caption{Poisoning improves an adversary's ability to perform three different privacy attacks.
    (a) For membership inference on CIFAR-10, we improve the true-positive rate (TPR) of~\cite{carlini2021membership} from 7\% to 59\%, at a 0.1\% false-positive rate (FPR). Conversely, at a fixed TPR of 50\%, we reduce the FPR by 480$\times$.
    (b) For attribute inference on Adult (to infer gender), we improve the TPR of~\cite{mehnaz2022your} by 30$\times$.
    (c) To extract 6-digit canaries from WikiText, we reduce the median number of guesses for the attack of~\cite{carlini2019secret} by 39$\times$, from 9{,}018 to 230.}
    \vspace{-0.25em}
\end{figure}

\added{Some works have highlighted connections between these two threats. For example, malicious parties in \emph{federated learning} can craft updates to increase the privacy leakage of other participants~\cite{nasr2019comprehensive, melis2019exploiting, wen2022fishing, hitaj2017deep}. 
Moreover, Chase et al.~\cite{chase2021property} show that poisoning attacks can increase leakage of \emph{global} properties of the training set (e.g., the prevalence of different classes). In this paper, we extend and strengthen these results by demonstrating that an adversary can \emph{statically} poison the training set to maximize the privacy leakage of \emph{individual} training samples belonging to other parties. In other words, we show that the ability to ``\emph{write}'' into the training dataset can be exploited to ``\emph{read}'' from other (private) entries in this dataset.}

%In this paper, we \textcolor{purple}{follow ~\cite{chase2021property} in connecting} these two lines of work. \textcolor{purple}{Differently, we} ask whether an adversary can exploit the ability to poison individual training samples in order to maximize the \textcolor{purple}{\emph{individual} }privacy leakage of other unknown training samples.
%In other words, can the ability to ``\emph{write}'' into the training dataset be exploited to arbitrarily ``\emph{read}'' from other (private) entries in this dataset?

We design targeted poisoning attacks on deep learning models that tamper with a small fraction of training data points (\textless$0.1\%$) to improve the performance of membership inference, attribute inference and data extraction attacks on other training examples, by 1 to 2 orders-of-magnitude.
For example, we show that by inserting just $8$ poison samples into the CIFAR-10 training set ($0.03\%$ of the data), an adversary can infer membership of a specific target image with a true-positive-rate (TPR) of $59\%$, compared to $7\%$ without poisoning, at a false-positive rate (FPR) of $0.1\%$. 
Conversely, poisoning enables membership inference attacks to reach 50\% TPR at a FPR of 0.05\%, an error rate $\mathbf{480\times}$ lower than the 24\% FPR from prior work.

Similarly, by poisoning 64 sentences in the WikiText corpus, an adversary can extract a secret 6-digit ``canary''~\cite{carlini2019secret} from a model trained on this corpus with a median of 230 guesses, compared to 9{,}018 guesses without poisoning (an improvement of $\mathbf{39\times}$).

We show that our attacks are robust to uncertainty about the targeted samples, and rigorously investigate the factors that contribute to the success of our attacks.  We find that poisoning has the most impact on samples that originally enjoy the strongest privacy, as our attacks reduce the \emph{average-case} privacy of samples in a dataset to the \emph{worst-case} privacy of data outliers.
We further demonstrate that poisoning drastically lowers the \emph{cost} of state-of-the-art privacy attacks, by alleviating the need for training \emph{shadow models}~\cite{shokri2016membership}.

%In addition to our empirical results with deep neural networks, we also prove that targeted poisoning attacks can allow an adversary to \emph{perfectly} infer membership in SVM and kNN classifiers.

We then consider \emph{untargeted} attacks where an adversary controls a larger fraction of the training data---as high as 50\%---and aims to increase privacy leakage of \emph{all} other data points. Such attacks are relevant when a small number of parties (e.g., 2) want to jointly train a model on their respective training sets without revealing their own (private) dataset to the other(s), e.g., by using secure multi-party computation~\cite{yao1982protocols, goldreich1987play}.
We show that untargeted poisoning attacks can reduce the error rate of membership inference attacks across all of the victim's data points by a factor of $8\times$.
% for a TPR of 50%, we have FPR=29% without poisoning, and FPR=8% with poisoning 

%Our results cast doubt on the relevance of using secure multiparty computation (MPC) techniques to protect the privacy of parties in collaborative learning protocols. As our attacks show, an adversary that honestly follows the protocol can exploit their freedom to choose their input data to cause large increases in privacy leakage.

%Our results call into question the relevance of applying cryptographic privacy guarantees to machine learning, such as when training models with secure multiparty computation (MPC) protocols. As our attacks show, a malicious party that \emph{honestly} follows the training protocol can exploit their freedom to choose their input data to strongly influence the protocol's privacy leakage.

Our results call into question the relevance of modeling machine learning models as \emph{ideal functionalities} in cryptographic protocols, such as when training models with secure multiparty computation (MPC).
As our attacks show, a malicious party that \emph{honestly} follows the training protocol can exploit their freedom to choose their input data to strongly influence the protocol's ``ideal'' privacy leakage.

%Finally, TODO something about DP. FT: My hope is that we can show that with poisoning, we can significantly reduce the gap between DP's provable and empirical privacy. This would be worrisome, since people might set the privacy budget based on empirical attacks on ``clean'' datasets.
%
% Not clear connection; we can make it keep backfiring
%\SH{Me and my student were looking at how feasible the worst-case attacks in ``Adversary Instantiation" paper in MPC setting, so we may be possible for delivering some interesting interactions between poisoners and the worst-case from DP. Let me put this as our second priority, but if anyone already has some results, I am happy to discuss more.}
%
%\newpage
\section{Background and Related Work}
\label{sec:rel_work}

\subsection{Attacks on Training Privacy}
\label{ssec:rel_work_privacy}
Training data privacy is an active research area in machine learning. In our work, we consider three canonical privacy attacks: membership inference~\cite{shokri2016membership}, attribute inference~\cite{fredrikson2015model,fredrikson2014privacy, yeom2018privacy}, and data extraction~\cite{carlini2019secret, carlini2020extracting}. 
In membership inference, an adversary's goal is to determine whether a given sample appeared in the training set of a model or not. 
Participation in a medical trial, for example, may reveal information about a diagnosis~\cite{homer2008resolving}. 
In attribute inference, an adversary uses the model to learn some unknown feature of a given user in the training set. For example, partial knowledge of a user's responses to a survey could allow the adversary to infer the response to other sensitive questions in the survey, by querying a model trained on this (and other) users' responses. Finally, in data extraction, we consider an adversary that seeks to learn a secret string contained in the training data of a language model. We focus on these three canonical attacks as they are the most often considered attacks on training data privacy in the literature.

\subsection{Attacks on Training Integrity}
\label{ssec:rel_work_integrity}
Poisoning attacks can be grouped into three categories: indiscriminate (availability) attacks, targeted attacks, and backdoor (or trojan) attacks. Indiscriminate attacks seek to reduce model performance and render it unusable~\cite{biggio2012poisoning, charikar2017learning, jagielski2018manipulating, fowl2021adversarial, munoz2017towards}. Targeted attacks induce misclassifications for specific benign samples~\cite{shafahi2018poison, suciu2018does, geiping2020witches}. Backdoor attacks add a ``trigger'' into the model, allowing an adversary to induce misclassifications by perturbing arbitrary test points~\cite{bhagoji2019analyzing, turner2019label, bagdasaryan2020backdoor}. Backdoors can also be inserted via supply-chain vulnerabilities, rather than data poisoning attacks~\cite{liu2017neural,liu2017trojaning, gu2019badnets}. However, none of these poisoning attacks have the goal of compromising \emph{privacy}.

Our work considers an attacker that poisons the training data to violate the privacy of other users. Prior work has considered this goal for much stronger adversaries, with additional control over the training procedure.
For example, an adversary that controls part of the training \emph{code} can use the trained model as a side-channel to exfiltrate training data~\cite{song2017machine, bagdasaryan2021blind}.
Or in federated learning, a malicious server can select \emph{model architectures} that enable reconstructing training samples~\cite{boenisch2021curious, fowl2022decepticons}. Alternatively, participants in decentralized learning protocols can boost privacy attacks by sending \emph{dynamic malicious updates}~\cite{nasr2019comprehensive, wen2022fishing, melis2019exploiting, hitaj2017deep}. Our work differs from these in that we only make the weak assumption that the attacker can add a small amount of arbitrary data to the training set \emph{once}, without contributing to any other part of training thereafter.
A similar threat model to ours is considered in~\cite{chase2021property}, for the weaker goal of inferring \emph{global} properties of the training data (e.g., the class prevalences).

\subsection{Defenses}
\label{ssec:rel_work_defenses}
As we consider adversaries that combine poisoning attacks and privacy inference attacks, defenses designed to mitigate either threat may be effective against our attacks.

Defenses against poisoning attacks (either indiscriminate or targeted) design learning algorithms that are robust to some fraction of adversarial data, typically by detecting and removing points that are out-of-distribution~\cite{diakonikolas2019sever, charikar2017learning, jagielski2018manipulating, gupta2019strong, tran2018spectral}.
Defenses against privacy inference either apply heuristics to minimize a model's memorization~\cite{nasr2018machine, jia2019memguard} or train models with differential privacy~\cite{dwork2006calibrating, abadi2016deep}. Training with differential privacy provably protects the privacy of a user's data in \emph{any} dataset, including a poisoned one.

\added{Since our main focus in this work is to introduce a novel threat model that amplifies individual privacy leakage through data poisoning, we design worst-case attacks that are not explicitly aimed at evading specific data poisoning defenses. We note that such poisoning defenses are rarely deployed in practice today. In particular, sanitizing user data in decentralized settings such as federated learning or secure MPC represents a major challenge
~\cite{kairouz2021advances}. In Section~\ref{ssec:mi_ablation_clipping}, we show that a simple loss-clipping approach---inspired by differential privacy---can significantly decrease the effectiveness of our poisoning attacks. Whether our attack techniques can be made robust to such defenses, as well as to more complex data sanitization mechanisms, is an interesting question for future work.}

%Various poisoning defenses have been proposed to prevent both indiscriminate attacks (e.g. \cite{jagielski2018manipulating, diakonikolas2019sever}) and targeted attacks (e.g. \cite{gupta2019strong, hong2020effectiveness}). These defenses can be difficult to apply in decentralized settings, such as training with secure multiparty computation or with federated learning. But, if a defense is applicable, it may identify our attacks, as we make no explicit attempt to evade poisoning defenses. Even if a defense against poisoning is deployed, the privacy leakage of the unpoisoned model could remain. Training with differential privacy~\cite{dwork2014algorithmic, abadi2016deep} is the only approach which is guaranteed to prevent leakage of a targeted sample in \emph{any} dataset, even a poisoned one.

A related line of work uses poisoning to measure the privacy guarantees of differentially private training algorithms~\cite{jagielski2020auditing, nasr2021adversary}. These works are fundamentally different than ours: they measure the privacy leakage \emph{of the poisoned samples themselves} to investigate worst-case properties of machine learning; in contrast, we show poisoning can harm \emph{other} benign samples.

\subsection{Machine Learning Notation}
\label{ssec:ml_notation}
A classifier $f_\theta: \mathcal{X} \to [0,1]^n$ is  a learned  function  that  maps  an  input  sample $x \in \mathcal{X}$ to a probability vector over $n$ classes. Given a training set $D$ sampled from some distribution $\mathbb{D}$, we let $f_\theta \gets \mathcal{T}(D)$ denote that a classifier with weights $\theta$ is learned by  running  the  training  algorithm $\mathcal{T}$ on  the  training  set $D$.
Given a labeled sample $(x, y)$, we let $\ell(f_\theta(x), y)$ denote a loss function applied to the classifier's output and the ground-truth label, typically the cross-entropy loss.

Causal language models are sequential classifiers that are trained to predict the next word in a sentence. Let sentences in a language be sequences of \emph{tokens} from a set $\mathbb{T}$ (e.g., all English words or sub-words~\cite{wu2016google}).
A generative language model $f_\theta: \mathbb{T}^* \to [0, 1]^{|\mathbb{T}|}$ takes as input a sentence $s$ of an arbitrary number of tokens, and outputs a probability distribution over the value of the next token.
Given a sentence $s=t_1 \dots t_k$ of $k$ tokens, we define the model's loss as:
\begin{equation}
\label{eq:lm_loss}
\ell(f_\theta, s) \coloneqq \frac{1}{k}\sum_{i=0}^{k-1} \ell_{\text{CE}}(f_\theta(t_1 \dots t_i), t_{i+1}), 
\end{equation}
where $\ell_{\text{CE}}$ is the cross-entropy loss and $t_1\dots t_0$ is the empty string.
\section{Amplifying Privacy Leakage \\ with Data Poisoning}
\label{sec:privacy_poisoning}

\textbf{Motivation.}
The fields of security and cryptography are littered with examples where an adversary can turn an attack on integrity into an attack on privacy.
For example, in cryptography a padding oracle attack \cite{bleichenbacher1998chosen, vaudenay2002security} allows an adversary to use their
ability to modify a ciphertext to learn the entire contents of the message.
Similarly, compression leakage attacks~\cite{kelsey2002compression, gluck2013breach} inject data into a user's encrypted traffic (e.g., HTTPS responses) and infer the user's private data by analysing the size of ciphertexts.
Alternatively, in Web security, some past browsers
were vulnerable to attacks wherein the ability to \emph{send} crafted email messages to a victim could be abused to actually \emph{read} the victim's other emails via a Cross-Origin CSS attack \cite{huang2010protecting}.
%
%One more example, if we can come up with a good one.
% FT: Heartbleed? A buffer overflow is essentially an ability to write to some part of the stack that gets exploited into corrupting/reading other parts.
% FT: Compression leakage attacks:
% - https://www.iacr.org/cryptodb/archive/2002/FSE/3091/3091.pdf
% - Thai Duong and Julianno Rizzo. The CRIME Attack,
% - Gluck et al., BREACH: REVIVING THE CRIME ATTACK
Inspired by these attacks, we show this same type of result is possible in the area of machine learning.

\subsection{Threat Model}
\label{ssec:threat_model}

We consider an adversary $\mathcal{A}$ that can inject some data $D_{\text{adv}}$ into a machine learning model's training set $D$. The goal of this adversary is to amplify their ability to infer information about the contents of $D$, by interacting with a model trained on $D \cup D_{\text{adv}}$.
In contrast to prior attacks on distributed or federated learning~\cite{nasr2019comprehensive, melis2019exploiting}, our adversary cannot actively participate in the learning process. The adversary can only statically poison their data once, and after this can only interact with the final trained model. 
% FT: I wouldn't focus too much on models trained on Internet data, since there the privacy threat is minor
%As a result, our threat model applies to training conducted inside of an MPC, where one can provably prevent tampering with the protocol, and can be launched when models are trained on untrusted data sourced from the Internet, a standard approach for large language models.

\paragraph{The privacy game.}

We consider a generic privacy game, wherein the adversary has to guess which element from some \emph{universe}~$\mathcal{U}$ was used to train a model. By appropriately defining the universe~$\mathcal{U}$ this game generalizes a number of prior privacy attack games, from membership inference to data extraction.

\begin{game}[Privacy Inference Game]
    \label{game:privacy}
    The game proceeds between a challenger $\mathcal{C}$ and an adversary $\mathcal{A}$. Both have  access to a distribution~$\mathbb{D}$, and know the universe $\mathcal{U}$ and training algorithm $\mathcal{T}$.
    \begin{enumerate}
        \item The challenger samples a dataset $D \gets \mathbb{D}$ and a target $z \gets \mathcal{U}$ from the universe (such that $D \cap \mathcal{U} = \emptyset$).
        %\item The challenger sends $(D, \mathcal{U})$ to the adversary.
        \item The challenger trains a model $f_\theta \gets \mathcal{T}(D \cup \{z\})$ on the dataset $D$ and target $z$. 
        \item The challenger gives the adversary query access to $f_\theta$.
        \item The adversary emits a guess $\hat{z} \in \mathcal{U}$.
        \item The adversary wins the game if $\hat{z} = z$.
    \end{enumerate}
\end{game}

The universe $\mathcal{U}$ captures the adversary's \emph{prior} belief about the possible value that the targeted example may take.  
In the membership inference game (see~\cite{yeom2018privacy,  jayaraman2020revisiting}), for a specific target example $x$ the universe is $\mathcal{U} = \{x, \bot\}$---where $\bot$ indicates the absence of an example. That is, the adversary guesses whether the model $f$ is trained on $D$ or on $D \cup \{x\}$. For attribute inference, the universe~$\mathcal{U}$ contains the real targeted example $x$, along with all ``alternate versions'' of $x$ with other values for an unknown attribute of $x$.
Attacks that extract well-formatted sensitive values, such as credit card numbers~\cite{carlini2019secret}, can be modeled with a universe $\mathcal{U}$ of all possible values that the secret could take.

We now introduce our new privacy game, which adds the ability for an adversary to poison the dataset.
This is a strictly more general game, with the objective of maximizing the privacy leakage of the targeted point.
The changes to Game~\ref{game:privacy} are highlighted in red.

\begin{game}[Privacy Inference Game with Poisoning]
    \label{game:privacy_poison}
    The game proceeds between a challenger $\mathcal{C}$ and an adversary $\mathcal{A}$. Both have  access to a distribution $\mathbb{D}$, and know the universe $\mathcal{U}$ and training algorithm $\mathcal{T}$.
    \begin{enumerate}
        \item The challenger samples a dataset $D \gets \mathbb{D}$ and a target $z \gets \mathcal{U}$ from the universe (such that $D \cap \mathcal{U} = \emptyset$).
        %\item The challenger sends $(D, \mathcal{U})$ to the adversary.
        \item {\color{DarkRed} The adversary sends a poisoned dataset $D_{\text{adv}}$ of size $N_{\text{adv}}$ to the challenger.}
        \item The challenger trains a model $f_\theta \gets \mathcal{T}(D \ {\color{DarkRed}\cup\ D_{\text{adv}}} \cup \{z\})$ on the {\color{DarkRed}poisoned dataset $D \cup D_{\text{adv}}$} and target $z$. 
        \item The challenger gives the adversary query access to $f_\theta$.
        \item The adversary emits a guess $\hat{z} \in \mathcal{U}$.
        \item The adversary wins the game if $\hat{z} = z$.
    \end{enumerate}
\end{game}

%Without loss of generality, we can cast the adversary's goal as that of distinguishing between two arbitrary training datasets $D_0$ and $D_1$. For example, a membership inference attack corresponds to the case $D_1 = D_0 \cup \{x\}$ for a specific training point $x$. Attribute inference attacks and data extraction attacks can similarly be viewed as implicitly distinguishing one dataset from (possibly many) other candidate datasets.

\subsubsection{Adversary Capabilities}

The above poisoning game implicitly assumes a number of adversarial capabilities, which we now discuss more explicitly.

\Cref{game:privacy_poison} assumes that the adversary knows the data distribution~$\mathbb{D}$ and the universe of possible target values $\mathcal{U}$. These capabilities are standard and easy to meet in practice.
The adversary further gets to add a set of $N_{\text{adv}}$ poisoned points into the training set. We will consider attacks that require adding only a small number of targeted poisoned points (as low as $N_{\text{adv}}=1$), as well as attacks that assume much larger data contributions (up to $N_{\text{adv}} = |D|$) as one could expect in MPC settings with a small number of parties.

We impose no restrictions on the adversary's poisons being ``stealthy''. That is, we allow for the poisoned dataset $D_{\text{adv}}$ to be arbitrary. 
As we will see, designing poisoning attacks that maximize privacy leakage is non-trivial---even when the adversary is not constrained in their choice of poisons.
As poisoning attacks that target data privacy have not been studied so far, we aim here to understand how effective such attacks could be in the worst case, and leave the study of attacks with further constraints (such as ``clean label'' poisoning~\cite{turner2019label, shafahi2018poison}) to future work.

Finally, the game assumes that the adversary targets a specific example $z$. We call this a \emph{targeted attack}. 
%We will also consider settings where the adversary does not know the exact value of the targeted example, but only some partial information about it (e.g., a similar example $\hat{x} \approx x$). 
We also consider \emph{untargeted} attacks in \Cref{ssec:mi_untargeted}, where the attacker crafts a poisoned dataset $D_{\text{adv}}$ to harm the privacy of \emph{all} samples in the training set $D$. 

\subsubsection{Success Metrics}

When the universe of secret values is small (as for membership inference, where $|\mathcal{U}|=2$, or for attribute inference where it is the cardinality of the attribute), we measure an attack's success rate by its true-positive rate (TPR) and false-positive rate (FPR) over multiple iterations of the game. Following~\cite{carlini2021membership}, we focus in particular on the attack performance at low false-positive rates (e.g., FPR=$0.1\%$), which measures the attack's propensity to precisely target the privacy of some worst-case users.

For \textbf{membership inference}, we naturally define a true-positive as a correct guess of membership, i.e., $\hat{z}=z$ when $z=x$, and a false-positive as an incorrect membership guess, $\hat{z}\neq z$ when $z=x$.

For \textbf{attribute inference}, we define a ``positive'' as an example with a specific value for the unknown attribute (e.g., if the unknown attribute is gender, we define ``female'' as the positive class).

For \textbf{canary extraction}, where the universe of possible target values is large (e.g., all possible credit card numbers), we amend \Cref{game:privacy_poison} to allow the adversary to obtain ``partial credit'' by emitting multiple guesses. Specifically, following~\cite{carlini2019secret}, we let the adversary output an \emph{ordering} (a permutation) $\hat{Z} = \pi(\mathcal{U})$ of the secret's possible values, from most likely to least likely. We then measure the attack's success by the \emph{exposure}~\cite{carlini2019secret} (in bits) of the correct secret $z$:
\begin{equation}
\label{eq:exposure}
    \exposure(z; \hat{Z}) \coloneqq \log_2\left(|\mathcal{U}|\right) - \log_2\left( \textbf{rank}(z; \hat{Z}) \right) \;.
\end{equation}
The exposure ranges from $0$ bits (when the correct secret $z$ is ranked as the least likely value), to $\log_2(|\mathcal{U}|)$ bits (when the adversary's most likely guess is the correct value $z$).

\subsection{Attack Overview}
\label{ssec:attack}

We begin with a high-level overview of our poisoning attack strategies. For simplicity of exposition, we focus on the special case of membership inference. Our attacks for attribute inference and canary extraction follow similar principles.

Given a target sample $(x, y)$, the standard privacy game (for membership inference) in \Cref{game:privacy} asks the adversary to distinguish two worlds, where the model is respectively trained on $D \cup \{(x, y)\}$ or on $D$.
When we give the adversary the ability to poison the dataset in \Cref{game:privacy_poison}, the goal is now to alter the dataset $D$ so that the above two worlds become easier to distinguish.

Note that this goal is very different from simply maximizing the model's memorization of the target $(x, y)$. This could be achieved with the following (bad) strategy: poison the dataset $D$ by adding multiple identical copies of $(x, y)$ into it. This will ensure that the trained model $f_\theta$ strongly memorizes the target (i.e., the model will correctly classify $x$ with very high confidence). However, this will be true in both worlds, regardless of whether the target $(x, y)$ was in the original training set $D$ or not. This strategy thus does not help the adversary in solving the distinguishing game---and in fact actually makes it \emph{more difficult} to distinguish membership.

Instead, the adversary should alter the training set $D$ so as to maximize the \emph{influence} of the target $(x, y)$. That is, we want the poisoned training set $D \cup D_{\text{adv}}$ to be such that the inclusion of the target $(x, y)$ provides a maximal \emph{change} in the trained model's behavior on some inputs of the adversary's choice.

To illustrate this principle, we begin by demonstrating a provably \emph{perfect} privacy-poisoning attack for the special case of nearest-neighbor classifiers. We also propose an alternative attack for SVMs in \Cref{apx:theory}.
We then describe our design principles for empirical attacks on deep neural networks.

\paragraph{\textbf{Warm-up:} provably amplifying membership leakage in kNNs}

Consider a $k$-Nearest Neighbor (kNN) classifier (assume, \emph{wlog.,} that $k$ is odd). Given a labeled training set $D$, and a test sample $x$, this classifier finds the $k$ nearest neighbors of $x$ in $D$, and outputs the majority label among these $k$ neighbors.
We assume the attacker has black-box query access to the trained classifier.

We demonstrate how to poison a kNN classifier so that the classifier labels a target example $(x,y)$ correctly \emph{if and only if} the target is in the original training set $D$. This attack thus lets the adversary win the membership inference game with $100\%$ accuracy.

Our poisoning attack (see Algorithm~\ref{alg:knnpois} in \Cref{apx:theory}) creates a dataset $D_{\text{adv}}$ of size $k$ that contains $k-1$ copies of the target $x$, half correctly labeled as $y$ and half mislabeled as $y' \neq y$. We further add one poisoned example $x'$ at a small distance $\delta$ from $x$ and also mislabeled as $y'$ (we assume that no other point in the training set $D$ is within distance $\delta$ from $x$).
This attack maximizes the \emph{influence} of the targeted point, by turning it into a tie-breaker for classifying $x$ when it is a member.

The attacker infers that the target example $(x,y)$ is a member, if and only if the trained model correctly classifies $x$ as class $y$. To see that the attack works, consider the two possible worlds:
\begin{itemize}
    \item \emph{The target is in $D$}: There are $k$ copies of $x$ in the poisoned training set $D \cup D_{\text{adv}}$: the $k-1$ poisoned copies (half are correctly labeled) and the target $(x, y)$. Thus, the majority vote among the $k$ neighbors yields the correct class $y$.
    \item \emph{The target is not in $D$}: As all points in $D$ are at distance at least $\delta$ from the target $x$, the $k$ neighbors selected by the model are the adversary's $k$ poisoned points, a majority of which are mislabeled as $y'$. Thus, the model outputs $y'$.
\end{itemize}

In \Cref{apx:theory}, we show that our attack is non-trivial, in that there exist points for which poisoning is \emph{necessary} to achieve perfect membership inference. In fact, we show that for some points, a non-poisoning adversary cannot infer membership better than chance.

\paragraph{Amplifying privacy leakage in deep neural networks}

The above attack on kNNs exploits the classifier's specific structure which lets us turn any example's membership into a perfect tie-breaker for the model's decision on that example.
In deep neural networks, it is unlikely that examples can exhibit such a clear cut influence (i.e., due to the stochasticity of training, it is unlikely that a specific model behavior would occur \emph{if and only if} an example is a member).

Instead, we could try to cast the adversary's goal as an \emph{optimization problem}, of selecting a poisoned dataset $D_{\text{adv}}$ that maximizes the distinguishability of models trained with or without the target $(x,y)$. Yet, solving such an optimization problem is daunting. While prior work does optimize poisons to maximally alter a \emph{single} model's confidence on a specific target point~\cite{shafahi2018poison, turner2019label, zhu2019transferable}, here we would instead need to optimize for a \emph{difference in distributions} of the decisions of two models trained on two neighboring datasets.

Rather than tackle this optimization problem directly, we ``handcraft'' strategies that empirically increase a sample's influence on the model.
We start from the observation in prior work that the most vulnerable examples to privacy attacks are data \emph{outliers}~\cite{yeom2018privacy, carlini2021membership}. Such examples are easy to attack precisely because they have a large influence on the model: a model trained on an outlier has a much lower loss on this sample than a model that was not trained on it.
Yet, in our threat model, the attacker cannot control or modify the targeted example $x$ (and $x$ is unlikely, \emph{a priori}, to be an outlier). Our insight then is to poison the training dataset so as to \emph{transform} the targeted example $x$ into an outlier. For example, we could fool the model into believing that the targeted point $x$ is \emph{mislabeled}. Then, the presence of the correctly labeled target $(x, y)$ in the training set is likely to have a large influence on the model's decision.

In \Cref{sec:mi}, we show how to instantiate this attack strategy to boost membership inference attacks on standard image datasets. We then extend this attack strategy in \Cref{sec:ai} to the case of attribute inference attacks for tabular datasets. Finally, in \Cref{sec:lm} we propose attack strategies tailored to language models, that maximize the leakage of specially formatted canary sequences.

\section{Membership Inference Attacks}
\label{sec:mi}

Membership inference (MI) captures one of the most generic notions of privacy leakage in machine learning. Indeed, any form of data leakage from a model's training set (e.g., attribute inference or data extraction) implies the ability to infer membership of some training examples.
As a result, membership inference is a natural target for evaluating the impact of poisoning attacks on data privacy.

In this section, we introduce and analyze data poisoning attacks that improve membership inference by one to two orders of magnitude.
\Cref{ssec:mi_targeted} describes a \emph{targeted} attack that increases leakage of a specific sample $(x, y)$, and \Cref{ssec:mi_analysis} contains an analysis of this attack's success. \Cref{ssec:mi_untargeted} explores \emph{untargeted} attacks that increase privacy leakage on \emph{all} training points simultaneously.

\subsection{Experimental Setup}
\label{ssec:mi_setup}

We extend the recent attack of \cite{carlini2021membership} that performs membership inference via a per-example log-likelihood test.
The attack first trains $N$ \emph{shadow models} such that each sample $(x,y)$ appears in the training set of half of the shadow models, and not in the other half.
We then compute the losses of both sets of models on $x$:
\begin{align*}
{L}_{\text{in}} &= \{ \ell(f(x), y) \ :\ f \text{ trained on } (x,y)\} \ , \\
{L}_{\text{out}} &= \{ \ell(f(x), y) \ :\ f \text{ not trained on } (x,y)\} 
\end{align*}
and fit Gaussian distributions $\mathcal{N}(\mu_{\text{in}}, \sigma_{\text{in}}^2)$ to $L_{\text{in}}$, and  $\mathcal{N}(\mu_{\text{out}}, \sigma_{\text{out}}^2)$ to $L_{\text{out}}$ (with a logit scaling of the losses, as in \cite{carlini2021membership}).
Then, to infer membership of $x$ in a trained model $f_\theta$, we compute the loss of $f_\theta$ on $x$, and perform a standard likelihood-ratio test for the hypotheses that $x$ was drawn from $\mathcal{N}(\mu_{\text{in}}, \sigma_{\text{in}}^2)$ or from $\mathcal{N}(\mu_{\text{out}}, \sigma_{\text{out}}^2)$.

To amplify the attack with poisoning, the adversary builds a poisoned dataset $D_{\text{adv}}$ that is added to the training set of $f_\theta$. The adversary also adds $D_{\text{adv}}$ to each shadow model's training set (so that these models are as similar as possible to the target model $f_\theta$).

We perform our experiments on CIFAR-10 and CIFAR-100~\cite{cifar}---standard image datasets of 50{,}000 samples from respectively 10 and 100 classes. The target models (and shadow models) use a Wide-ResNet architecture~\cite{zagoruyko2016wide} trained for 100 epochs with weight decay and common data augmentations (random image flips and crops). For each dataset, we train $N=128$ models on random $50\%$ splits of the original training set.\footnote{The training sets of the target model and shadow models thus partially overlap (although the adversary does not know which points are in the target's training set). Carlini et al.~\cite{carlini2021membership} show that their attack is minimally affected if the attacker's shadow models are trained on datasets fully disjoint from the target's training set.} The models achieve 91\% test accuracy on CIFAR-10 and 67\% test-accuracy on CIFAR-100 on average.

\begin{figure}[t]
    \centering
    \includegraphics[width=0.8\columnwidth]{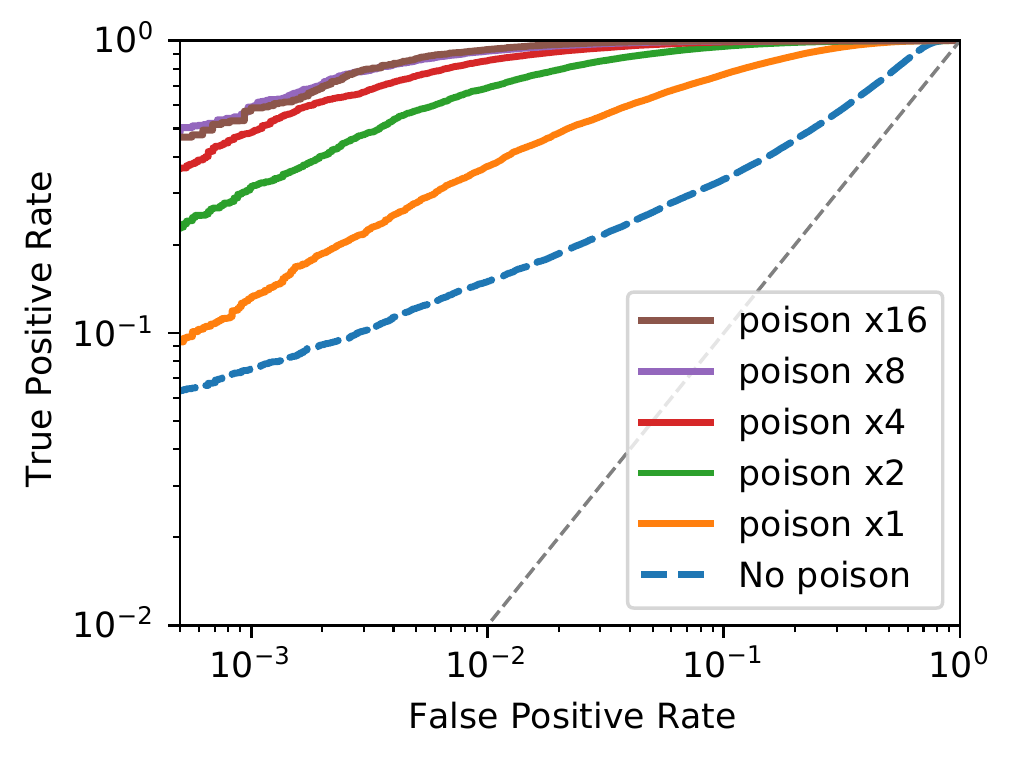}
    \caption{Targeted poisoning attacks boost membership inference on CIFAR-10. For 250 random data points, we insert $1$ to $16$ mislabelled copies of the point into the training set, and run the MI attack of~\cite{carlini2021membership} with 128 shadow models.}
    \label{fig:mi_cifar10}
\end{figure}

\subsection{Targeted Poisoning Attacks}
\label{ssec:mi_targeted}

We now design our poisoning attack to increase the membership inference success rate for a specific target example $x$. That is, the attacker knows the data of $x$ (but not whether it is used to train the model) and designs a poisoned dataset $D_{\text{adv}}$ \emph{adaptively} based on $x$. 

%How should such a poisoned dataset be designed to maximize the leakage of $x$? Prior work had shown that the examples that are most vulnerable to membership inference attacks are (unsurprisingly) data \emph{outliers}~\cite{yeom2018privacy, carlini2021membership}. Such examples are easy to attack, because they have a large \emph{influence} on the trained model. That is, a model trained on an outlier will typically have a much lower loss on this outlier than a model that was not trained on it. 
%Yet, in our threat model, the attacker cannot control or modify the targeted example $x$. And thus $x$ is unlikely, \emph{a priori}, to be an outlier. Our insight then it to poison the training dataset so as to turn the targeted example $x$ into an outlier. We achieve this by fooling the model into believing that the targeted point $x$ is \emph{mislabeled}.

\paragraph{Label flipping attacks.}
We find that \emph{label flipping} attacks are a very powerful form of poisoning attacks to increase data leakage. Given a targeted example $x$ with label $y$, the adversary inserts the mislabelled poisons $D_{\text{adv}} = \{(x, y'), \dots, (x, y')\}$ for some label $y' \neq y$. The rationale for this attack is that a model trained on $D_{\text{adv}}$ will learn to associate $x$ with label $y'$, and the now ``mislabelled'' target $(x, y)$ will be treated as an outlier and have a heightened influence on the model when present in the training set.

To instantiate this attack on CIFAR-10 and CIFAR-100, we pick $250$ targeted points at random from the original training set. For each targeted example $(x, y)$, the poisoned dataset $D_{\text{adv}}$ contains a mislabelled example $(x, y')$ replicated $r$ times, for $r \in \{1, 2, 4, 8, 16\}$.
We report the average attack performance for a full leave-one-out cross-validation (i.e., we evaluate the attack 128 times, using one model as the target and the rest as shadow models).

\paragraph{Results.}
\Cref{fig:mi_cifar10} and \Cref{fig:mi_cifar100} (appendix) show the performance of our membership inference attack on CIFAR-10 and CIFAR-100 respectively, as we vary the number of poisons $r$ per sample.

We find that this attack is remarkably effective. Even with a single poisoned example ($r=1$), the attack's true-positive rate (TPR) at a $0.1\%$ false-positive rate (FPR) increases by $1.75\times$. With $8$ poisons ($0.03\%$ of the model's training set size), the TPR increases by a factor $8\times$ on CIFAR-10, from $7\%$ to $59\%$.
On CIFAR-100, poisoning increases the baseline's strong TPR of $22\%$ to $69\%$ at a FPR of $0.1\%$.

Alternatively, we could aim for a fixed recall and use poisoning to reduce the MI attack's error rate. Without poisoning, an attack that correctly identifies half of the targeted CIFAR-10 members (i.e., a TPR of $50\%$) would also incorrectly label $24\%$ of non-members as members. With poisoning, the same recall is achieved while only mislabeling $0.05\%$ of non-members---\textbf{a factor $\mathbf{480\times}$ improvement}. On CIFAR-100, also for a $50\%$ TPR, poisoning reduces the attack's false-positive rate \textbf{by a factor $\mathbf{100\times}$}, from $2.5\%$ to $0.025\%$.

\begin{figure}[t]
    \centering
    \includegraphics[width=0.8\columnwidth]{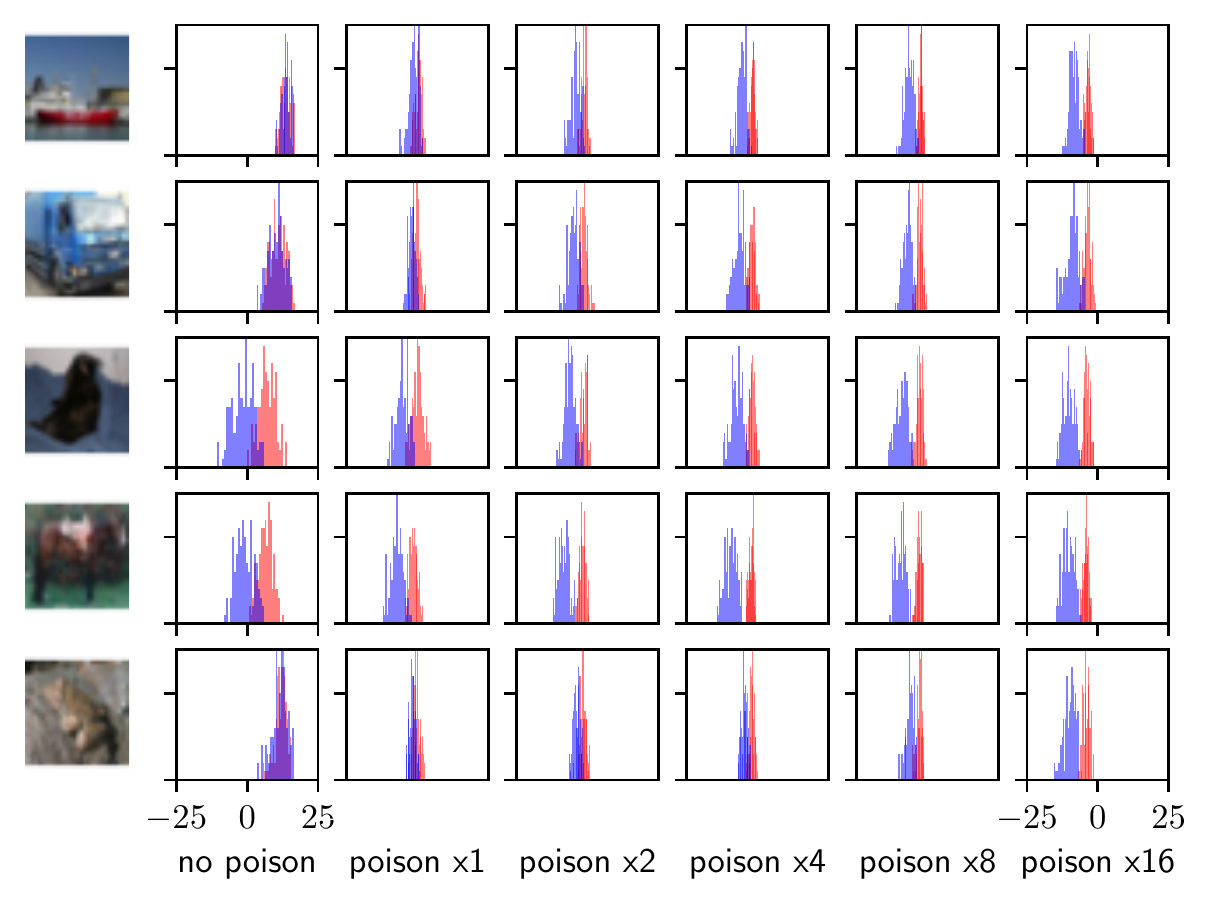}
    \caption{Our poisoning attack separates the loss distributions of members and non-members, making them more distinguishable. For five random CIFAR-10 examples, we plot the (logit-scaled) loss distribution on that example when it is a member (red) or not (blue). The horizontal axis varies the number of times the adversary poisons the example.}
    \label{fig:mi_cifar10_dists_CE}
\end{figure}

As we run multiple targeted attacks simultaneously (for efficiency sake), the total number of poisons is large (up to $4{,}000$ mislabelled points). Yet, the poisoned model's test accuracy is minimally reduced (from $92\%$ to $88\%$) and the MI success rate on non-targeted points remains unchanged. Thus, we are not compounding the effects of the $250$ targeted attacks. As a sanity check, we repeat the experiment with only $50$ targeted points, and obtain similar results.

\subsection{Analysis and Ablations}
\label{ssec:mi_analysis}

We have shown that targeted poisoning attacks significantly increase membership leakage. We now set out to understand the principles underlying our attack's success.

\subsubsection{Why does our attack work?}

In \Cref{fig:mi_cifar10_dists_CE} we plot the distribution of model confidences for five CIFAR-10 examples, when the example is a member (in red) and when it is not (in blue). 
%Following~\cite{carlini2021membership}, we apply a logit-rescaling to the confidences so as to obtain approximately Gaussian distributions.
On the horizontal axis, we vary the number of poisons (i.e., how many times this example is mislabeled in the training set).
Without poisoning (left column), the distributions overlap significantly for most examples.
%
%Moreover, the ``in'' and ``out'' distributions of different examples are on vastly different scales, as some examples are much easier to learn (or generalize to) than others.
As we increase the number of poisons, the confidences shift significantly to the left, as the model becomes less and less confident in the example's true label. But crucially, the distributions also become easier to separate, because the (relative) influence of the targeted example on the trained model is now much larger.

To illustrate, consider the top example in \Cref{fig:mi_cifar10_dists_CE} (labeled ``ship''). Without poisoning, this example's confidence is in the range [99.99\%, 100\%] when it is a member, and [99.98\%, 100\%] when it is not. Confidently inferring membership is thus impossible. With 16 poisons, however, the confidence on this example is in the range [0.4\%, 28.5\%] when it is a member, and [0\%, 2.4\%] when it is not---thus enabling precise membership inference when the confidence exceeds $2.4\%$.

\iffalse %FT: revisit these results. If we use the actual scores from the MI attacks instead of a proxy scores, this type of analysis will likely be tautological
%Ayrton: Ok, because the significance can directly be seen from the ROC plots already, right? 
%FT: Yeah since we know that the attack works, by definition there will be a statistically significant difference in the MI scores for members and non-members
Another reason is that the poison data increases the memorization of the target data.
Mislabelled data is known to exhibit high memorization. \cite{koh2017understanding, pruthi2020estimating, feldman2020does}.
As shown by \cite{feldman2020does}, one reason for this is that because the algorithm minimizes the training loss,
the model is encouraged to predict the training label for every point, even for points with wrong training labels.
We show this by constructing a 1-sided paired t-test for each membership case (members and non-members).
\footnote{We measure memorization using the TracIn equation for self-influence by \cite{pruthi2020estimating}. This approximation allows memorization for non-members. }

Let $\mu_{D_{adv}}$ denote the mean memorization of the target data when $D_{adv}$ is in the training set and $\mu$ otherwise. For both members and non-members, our hypotheses are the following

\begin{itemize}
	\item Null: $\mu_{D_{adv}} = \mu$
	\item Alternative: $\mu_{D_{adv}} > \mu$
\end{itemize}

Our test statistic is the difference of the sample mean memorization of the two cases. With $\alpha = 0.05$,
we found that both members and nonmembers have a statistically significant increase in memorization from label flipping
with p-values of $0.0110, 0.0002$, respectively.
\fi

\begin{figure}[t]
    \centering
    \includegraphics[width=0.8\columnwidth]{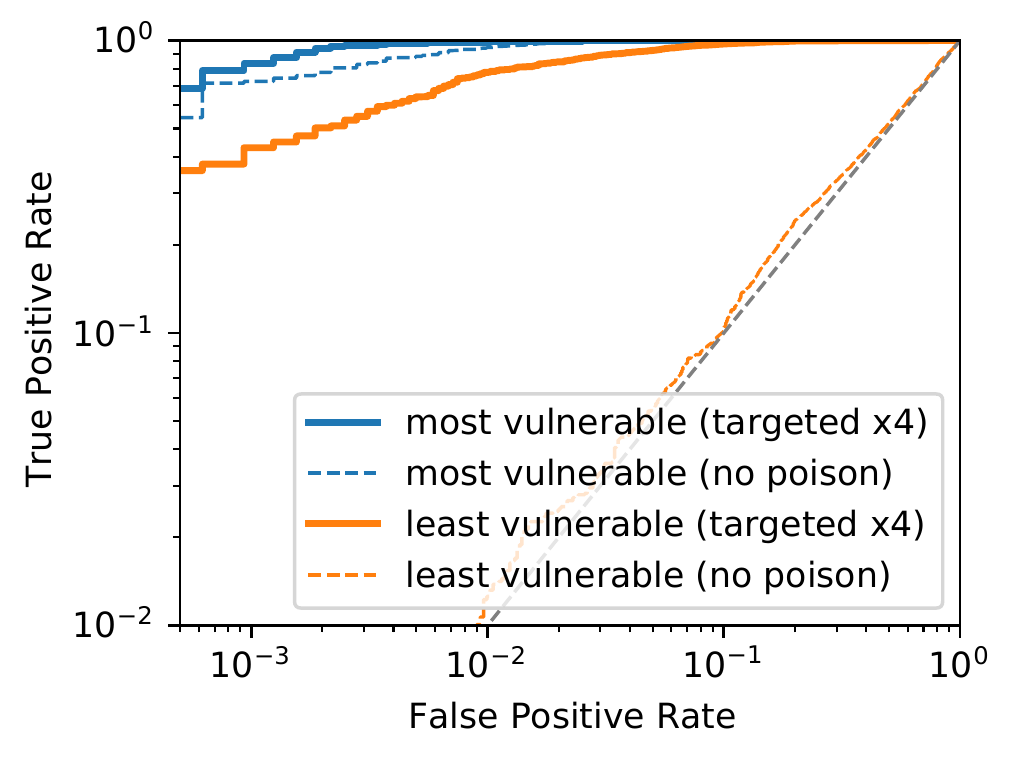}
    \caption{Poisoning causes previously-safe data points to become vulnerable.
    We run our attack for the 5\% of points that are originally most- and least-vulnerable to membership inference without poisoning. While poisoning has little effect for the most vulnerable points, poisoning the least vulnerable points improves the TPR at a 0.1\% FPR by a factor $430\times$.}
    \label{fig:mi_extreme}
\end{figure}

\subsubsection{Which points are vulnerable to our attack?}
\label{sssec:extreme_points}
Our poisoning attack could increase the MI success rate in different ways. Poisoning could increase the attack accuracy uniformly across all data points, or it might disparately impact some data points. We show that the latter is true: \textbf{our attack disparately impacts \emph{inliers} that were originally \emph{safe} from membership inference}.
This result has striking consequences: even if a user is an inlier and therefore might not be worried about privacy leakage, an active poisoning attacker that targets this user can still infer membership.

In \Cref{fig:mi_extreme}, we show the performance of our poisoning attack on those data points that are initially easiest and hardest to infer membership for.
We run the membership inference attack of~\cite{carlini2021membership} on all CIFAR-10 points, and select the 5\% of samples where the attack succeeds least often and most often (averaged over all 128 models).
We then re-run the baseline attack on these extremal points with a new set of models (to ensure our selection of points did not overfit) and compare with our label flipping attack with $r=4$.

Poisoning has a minor effect on data points that are already outliers: here even the baseline MI attack has a high success rate (73\% TPR at a 0.1\% FPR) and thus there is little room for improvement.\footnote{In \Cref{fig:mi_cifar10_dists_CE}, we see that examples for which MI succeeds \emph{without} poisoning tend to already be outliers. For example, the third and fourth example from the top are a ``bird'' mislabelled as ``cat'' in the CIFAR-10 training set, and a ``horse'' confused as a ``deer''.} For points that are originally hardest to attack, however, poisoning improves the attack's TPR \textbf{by a factor $\mathbf{430\times}$}, from 0.1\% to 43\%.

\begin{figure}[t]
    \centering
    \includegraphics[width=0.8\columnwidth]{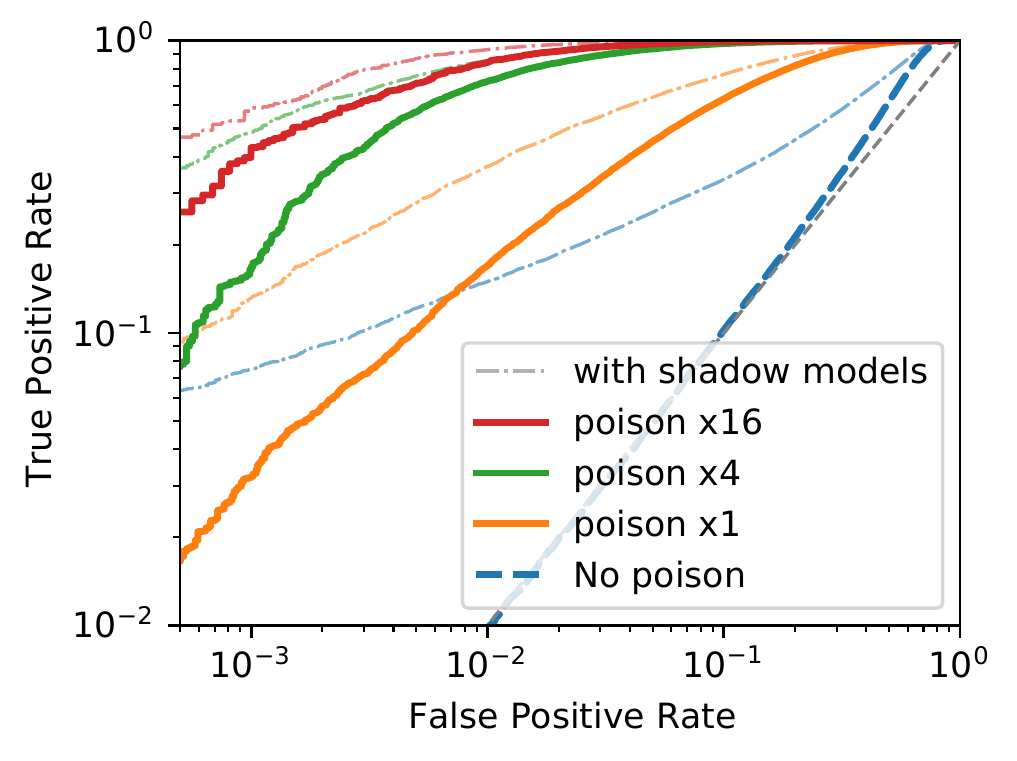}
    \caption{Membership inference attacks with poisoning do not require shadow models.
    With poisoning, the global threshold attack of~\cite{yeom2018privacy} performs nearly as well on CIFAR-10 as the attack of~\cite{carlini2021membership} that uses 128 shadow models to compute individual decision thresholds for each example.}
    \label{fig:mi_cifar10_global}
\end{figure}

\subsubsection{Are shadow models necessary?}
Following~\cite{carlini2021membership, sablayrolles2019white, watson2021importance, ye2021enhanced, long2020pragmatic}, our MI attack relies on shadow models to \emph{calibrate} the confidences of individual examples. Indeed, as we see in the first column of \Cref{fig:mi_cifar10_dists_CE}, the confidences of different examples are on different scales, and thus the optimal threshold to distinguish a member from a non-member varies greatly between examples. Yet, as we increase the number of poisoned samples, we observe that the scale of the confidences becomes unified across examples. And with 16 poisons, the threshold that best distinguishes members from non-members is approximately the same for all examples in \Cref{fig:mi_cifar10_dists_CE}.

As a result, we show in \Cref{fig:mi_cifar10_global} that \textbf{with poisoning, the use of shadow models for calibration is no longer necessary to obtain a strong MI attack}. By simply setting a global threshold on the confidence of a targeted example (as in~\cite{yeom2018privacy}) the MI attack works nearly as well as our full attack that trains 128 shadow models.

This result renders our attack much more practical than prior attacks. Indeed, in many settings, training even a single shadow model could be prohibitively expensive for the attacker (in terms of access to training data or compute). In contrast, the ability to poison a small fraction of the training set may be much more realistic, especially for very large models.
Recent works~\cite{carlini2021membership, ye2021enhanced, mireshghallah2022quantifying, watson2021importance} show that non-calibrated MI attacks (without poisoning) perform no better than chance at low false-positives (see \Cref{fig:mi_cifar10_global}). With poisoning however, these non-calibrated attacks perform extremely well. %For example, to reach a TPR of 50\%, a non-calibrated attack without poisoning has a FPR of 42\% (near chance), whereas a non-calibrated attack with 16 targeted poisons has a FPR of 0.1\%---\textbf{a reduction in error rate of $\mathbf{420\times}$}. 
At a FPR of $0.1\%$, a non-calibrated attack without poisoning has a TPR of $0.1\%$ (random guessing), whereas a non-calibrated attack with 16 targeted poisons has a TPR of $43$\%---\textbf{an improvement of $\mathbf{430\times}$}.

\subsubsection{Does the choice of label matter?}

\begin{figure}[t]
    \centering
    \includegraphics[width=0.8\columnwidth]{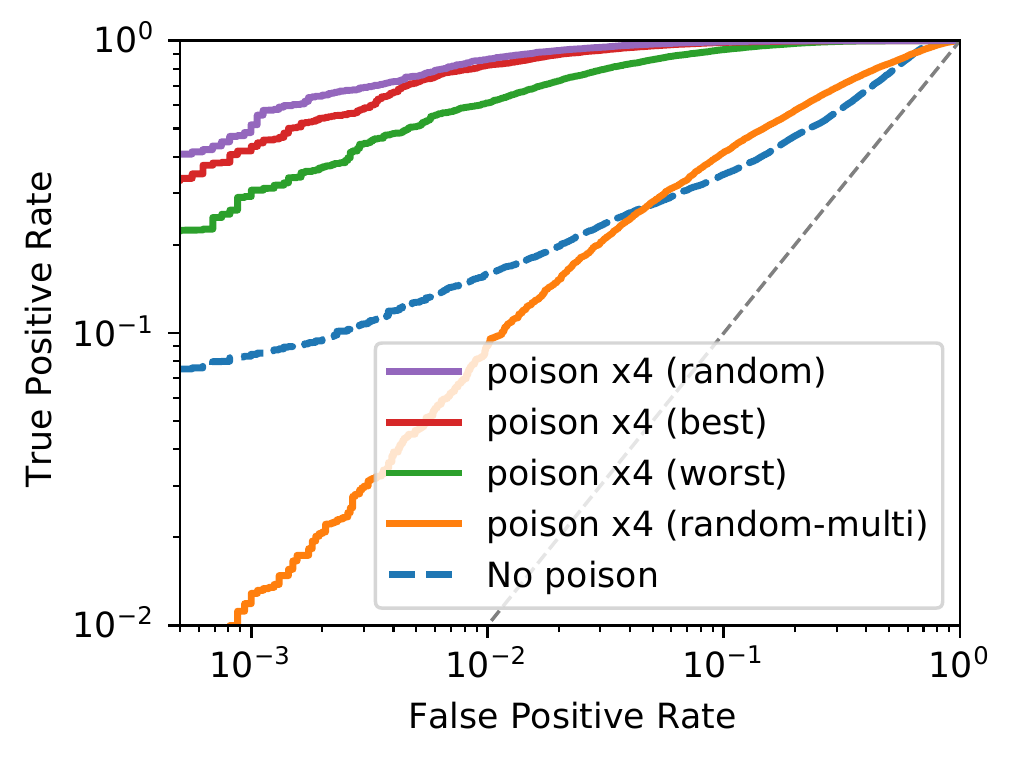}
    \caption{Comparison of mislabelling strategies on CIFAR-10. Assigning the same random incorrect label to 4 poison copies performs better than mislabeling as the most likely incorrect class (best) or the least likely class (worst). Assigning different incorrect labels to the 4 copies (random-multi) severely reduces the attack success rate.}
    \label{fig:mi_cifar10_target}
\end{figure}

Our poisoning attack injects a targeted example with an incorrect label. For the results in \Cref{fig:mi_cifar10} and \Cref{fig:mi_cifar100}, we select an incorrect label at random (if we replicate a poison $r$ times, we use the same label for each replica).

In \Cref{fig:mi_cifar10_target} we explore alternative strategies for choosing the incorrect label on CIFAR-10. We consider three other strategies:
\begin{itemize}
    \item \emph{best}: mislabel the poisons as the most likely incorrect class for that example (as predicted by a pre-trained model).
    \item \emph{worst}: mislabel the poisons as the least likely class.
    \item \emph{random-multi}: sample an incorrect label at random (without replacement) for each of the $r$ poisons. 
\end{itemize}

These three strategies perform worse than the random approach. On both CIFAR-10 (\Cref{fig:mi_cifar10_target}) and CIFAR-100 (\Cref{fig:mi_cifar100_target}) the ``best'' and ``worst'' strategies do slightly worse than random mislabeling. The ``random-multi'' strategy does much worse, and under-performs the baseline attack without poisoning at low FPRs. This strategy has the opposite effect of our original attack, as it forces the model to predict a near-uniform distribution across classes, which is only minimally influenced by the presence or absence of the targeted example.
Overall, this experiment shows that the exact choice of incorrect label matters little, as long as it is consistent.

\subsubsection{Can the attack be improved by modifying the target?}

Our poisoning attack only tampers with a target's \emph{label} $y$, while leaving the example $x$  unchanged. It is conceivable that an attack that also alters the sample before poisoning could result in even stronger leakage.
In \Cref{apx:mi_adv} we experiment with a number of such strategies, inspired by the literature on clean-label poisoning attacks~\cite{turner2019label, shafahi2018poison, zhu2019transferable}. But we ultimately failed to find an approach that improves upon our attack and leave it as an open problem to design better privacy-poisoning strategies that alter the target sample.

\subsubsection{Does the attack require exact knowledge of the target?}

Existing membership inference attacks, which can be used for auditing ML privacy vulnerabilities, typically assume exact knowledge of the targeted example (so that the adversary can query the model on that example). Our attack is no different in this regard: it requires knowledge of the target at training time (in order to poison the model) and at evaluation time to run the MI attack.

We now evaluate how well our attack performs when the adversary has only partial knowledge of the targeted example. As we are dealing with images here, defining such partial knowledge requires some care. We will assume that instead of knowing the exact target example $x$, the adversary knows an example $\hat{x}$ that ``looks similar'' to $x$. The attacker needs to guess whether $x$ was used to train a model. To this end, the attacker poisons the target model (and the shadow models) by injecting mislabeled versions of $\hat{x}$ and queries the target model on $\hat{x}$ to formulate a guess.

Details of this experiment are in \Cref{apx:mi_influence}. \Cref{fig:mi_cifar10_influence} shows that \textbf{our attack (as well as the baseline without poisoning) are robust to an adversary with only partial knowledge of the target}. 
At a FPR of 0.1\%, the TPR is reduced by \textless$1.6\times$ for both the baseline attack and our attack with 4 poisons per target.

\begin{figure}[t]
    \centering
    \includegraphics[width=0.8\columnwidth]{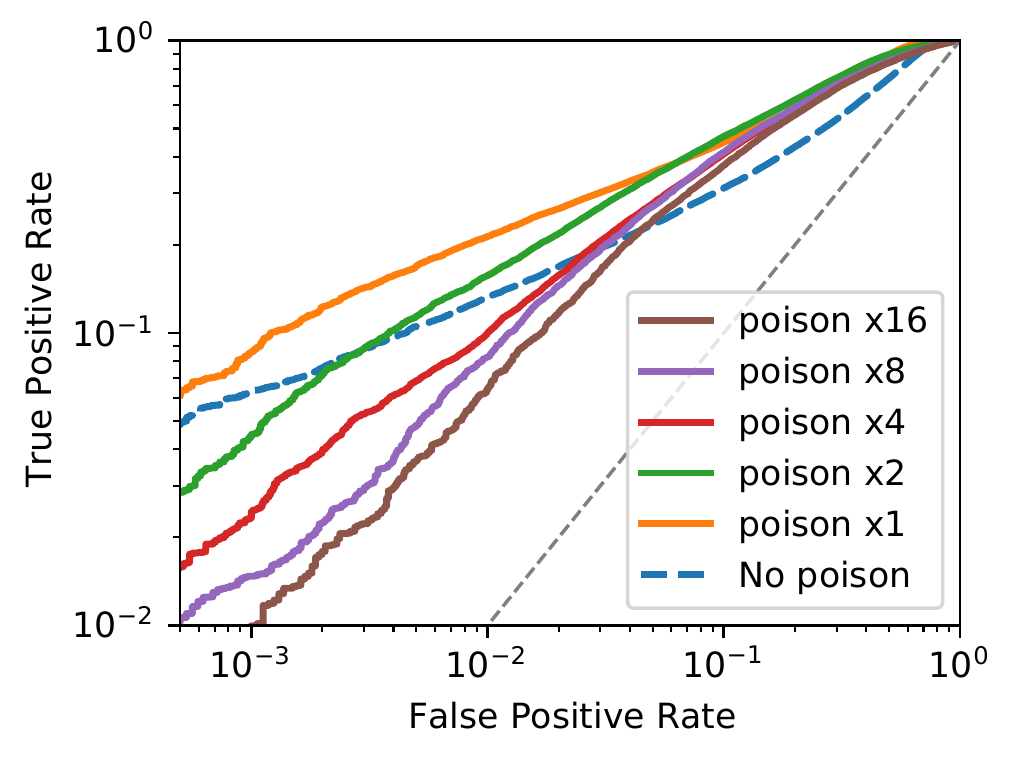}
    \caption{For CIFAR-10 models trained with losses clipped to $C=1$, poisoning only moderately increases the success of MI attacks. With more than 1 mislabeled copy of the target, poisoning harms the attack at low false-positives.}
    \label{fig:mi_cifar10_clipped_1}
\end{figure}

\subsubsection{Can we mitigate the attack by bounding outlier influence?}
\label{ssec:mi_ablation_clipping}
As we have shown, our attack succeeds by turning data points into outliers, which then have a high influence on the model's decisions.
Our privacy-poisoning attack can thus likely be mitigated by bounding the influence that an outlier can have on the model. For example, training with differential privacy~\cite{dwork2006calibrating, abadi2016deep} would prevent our attack, as it bounds the influence that \emph{any} outlier can have in \emph{any} dataset (including a poisoned one). Algorithms for differentially private deep learning bound the size of the \emph{gradients} of individual examples~\cite{abadi2016deep}. Here, we opt for a slightly simpler approach that bounds the \emph{losses} of individual examples (the two approaches are equivalent if we assume some bound on the model's activations in a forward pass). Bounding losses rather than gradients has the advantage of being much more computationally efficient, as it simply requires scaling losses before backpropagation.

In \Cref{fig:mi_cifar10_clipped_1}, we plot the MI success rate with and without poisoning, when each example's cross-entropy loss is bounded to $C=1$. Clipping in this way only slightly reduces the success rate of the  attack without poisoning, but significantly harms the success of the poisoning attack at low false-positives. While our attack with $r=1$ poisons per target still improves over the baseline, including additional poisons \emph{weakens} the attack, as the original sample's loss can no longer grow unbounded to counteract the poisoning. 

In \Cref{fig:mi_cifar10_dists_CE_clipped} in \Cref{apx:mi_clipping}, we show the effect of training with loss clipping on the distributions of member and non-member confidences for five random CIFAR-10 samples, analogously to \Cref{fig:mi_cifar10_dists_CE}. Poisoning the model with mislabeled samples still shifts the confidences to very low values, but the inclusion of the correctly labeled target no longer clearly separates the two distributions.

While loss clipping thus appears to be a simple and effective defense against our poisoning attack, it is no privacy panacea. Indeed, the original baseline MI attack retains high success rate. As we show in \Cref{fig:cifar10_clipped}, further reducing the clipping bound (to $C=10^{-3}$) does reduce the baseline MI attack to near-chance. But in this regime, poisoning does again increase the attack's success rate at low false-positives by a factor $3\times$. Moreover, aggressive clipping reduces the model's test accuracy from $91\%$ to $87\%$---an increase in error rate of $45\%$ (equivalent to undoing three years of progress in machine learning research).
Finally, we also show in \Cref{ssec:mi_untargeted} that an alternative \emph{untargeted attack strategy}, that increases leakage of all data points, remains resilient to moderate loss clipping.

\begin{figure}[t]
    \centering
    \includegraphics[width=0.8\columnwidth]{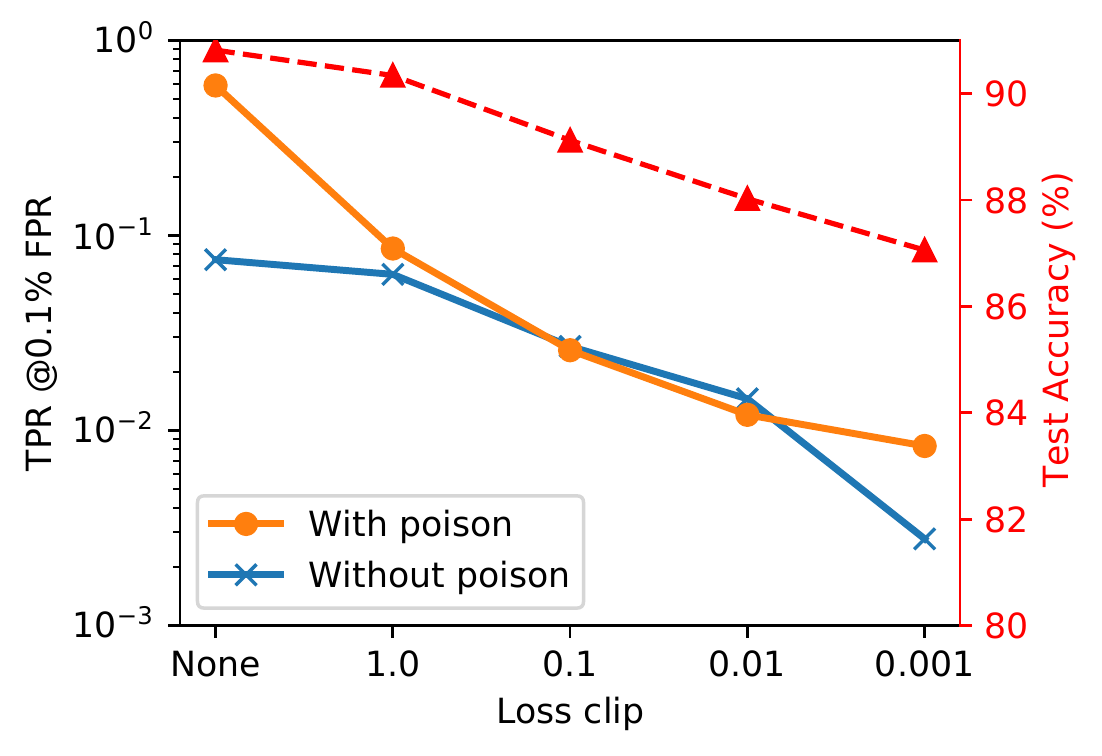}
    \caption{Aggressive loss clipping reduces the success rate of MI attacks on CIFAR-10 nearly to chance. However, poisoning can still boost the attack success by up to $3\times$, and the model's test error is increased by 45\%.}
    \label{fig:cifar10_clipped}
\end{figure}

\subsection{Untargeted Poisoning Attacks}
\label{ssec:mi_untargeted}

So far we have considered poisoning attacks that target a specific example $(x,y)$ that is known (exactly or partially) to the adversary. We now turn to more general \emph{untargeted} attacks, where the adversary aims to increase the privacy leakage of \emph{all} honest training points.
As this is a much more challenging goal, we will consider adversaries who can compromise a much larger fraction of the training data.
This threat is realistic in settings where a small number of parties decide to collaboratively train a model by pooling their respective datasets (e.g., two or more hospitals that train a joint model using secure multi-party computation).

%The threat model we consider here is one where a small number of parties (e.g., just two) decide to collaboratively learn a model on their respective datasets. To protect the privacy of their data, the parties might opt for cryptographic solutions such as secure multi-party computations protocols for machine learning~\cite{mohassel2017secureml, goldreich1987play, Yao1986mpc}. Such protocols guarantee that the collaborative learning of a model leaks nothing about each honest party's individual dataset, beyond the final model. These guarantees hold even if some parties act maliciously and arbitrarily deviate from the protocol. Yet, as our attack in this section shows, such guarantees 

\paragraph{Setup}

We consider a setting where the training data is split between two parties. One party acts maliciously and chooses their data so as to maximize the leakage of the other party's data. 

We adapt the experimental setup of targeted attacks described in \Cref{ssec:mi_setup}.
For our untargeted attack, we assume the adversary's poisoned dataset $D_{\text{adv}}$ is of the same size as the victim's training data $D$. We propose an untargeted variant of our label flipping attack, in which the attacker picks their data from the same distribution as the victim, i.e., $D_{\text{adv}} \gets \mathbb{D}$, but flips all labels to the same randomly chosen class: $D_{adv} = \{(x_1, y), \dots, (x_{N_{\text{adv}}}, y)\}$.
This attack aims to turn \emph{all} points that are of a different class than $y$ into outliers, so that their membership becomes easier to infer.

To evaluate the attack on CIFAR-10, we select $12{,}500$ points at random from the training set to build the poisoned dataset $D_{\text{adv}}$. The honest party's dataset $D$ consists of $12{,}500$ points sampled from the remaining part of the training set.
We train a target model on the joint dataset $D \cup D_{\text{adv}}$. The attacker further trains shadow models by repeating the above process of sampling an honest dataset $D$ and combining it with the adversary's fixed dataset $D_{\text{adv}}$. In total, we train $N=128$ models. We run the membership inference attack on a set of $25{,}000$ points disjoint from $D_{\text{adv}}$, half of which are actual members of the target model. We average results over a 128-fold leave-one-out cross-validation where we choose one of the 128 models as the target and the others as the shadow models.
%Ayrton: adding this footnote if relevant:
%FT: this is also the case for the targeted attacks, so better to discuss this there
% \footnote{One may be concerned that the attacker and target's dataset in this case is not disjoint, which may favor the attacker. However, \cite{carlini2021membership} showed that LiRA's performance is unaffected by whether the attacker and target's dataset is disjoint or not.}

\begin{figure}[t]
    \centering
    \includegraphics[width=0.8\columnwidth]{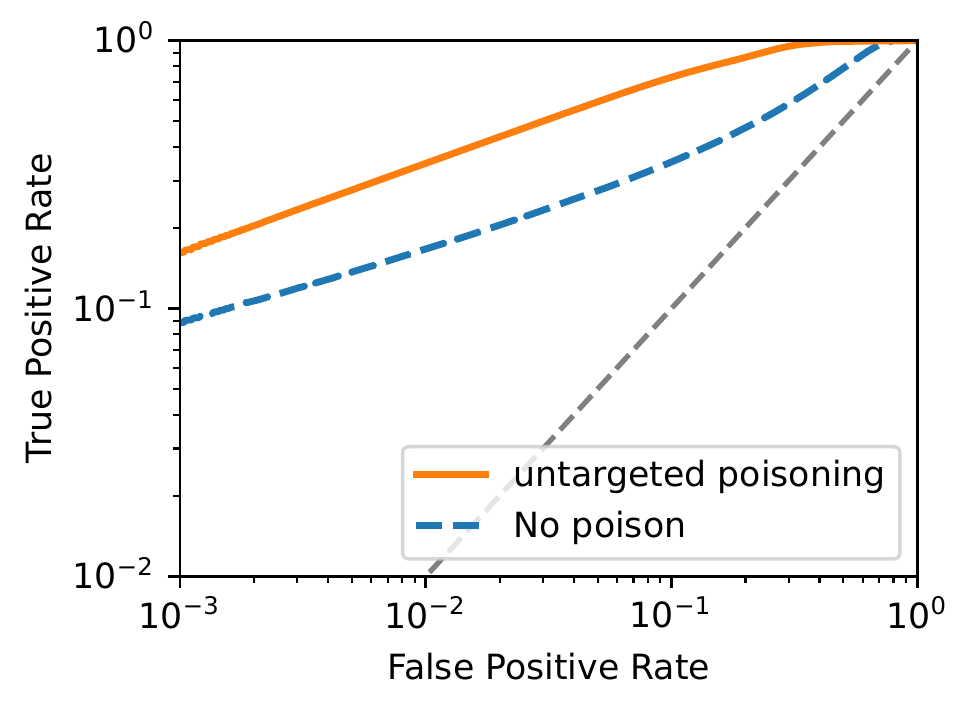}
    \caption{An untargeted poisoning attack that consistently mislabels 50\% of the CIFAR-10 training data increases membership inference across all other data points.}
    \label{fig:cifar10_untargeted}
\end{figure}

\paragraph{Results}

Figure~\ref{fig:cifar10_untargeted} shows the performance of our untargeted attack. Poisoning reliably increases the privacy leakage of all the honest party's data points. At a FPR of 0.1\%, the attack's TPR across all the victim's data grows from 9\% without poisoning to 16\% with our untargeted attack.
Conversely, at a fixed recall, untargeted poisoning reduces the attack's error rate drastically. With our poisoning strategy, the attacker can correctly infer membership for half of the honest party's data, at a false-positive rate of only 3\%, compared to an error rate of 24\% without poisoning---an improvement of a factor 8$\times$.
We include results for other untargeted poisoning strategies, as well as replications on additional datasets in \Cref{apx:mi_untargeted}. 
% [Result reference] With poison, the TPR@FPR=0.1\% is 16.14\% and the TPR@FPR=1\% is 34.73\%, without poison, the TPR@FPR=0.1\% is 8.83\% and the TPR@FPR=1\% is 16.14\%. With targeted random label flipping (16 shadow models), in which the adversary flips the label of the victim's data and use them as poison, the TPR@FPR=0.1\% is 14.44\% and the TPR@FPR=1\% is 40.19\%

We further evaluate this untargeted attack against the simple ``loss clipping'' defense from \Cref{ssec:mi_ablation_clipping}. In contrast to the targeted case, we find that moderate clipping ($C=1$) has no effect on the untargeted attack and that with more stringent clipping, the model's test accuracy is severely reduced.

% Analyses coming up:
% 1. The results on 
%   (1) other datasets
%   (2) different # of shadow models
%   (3) other models (logistic regressions or 2/4-layer feedforward NNs
% 2. The loss (score) distributions before the compromise and after
% 3. The effectiveness of different poisoning strategies
% ...

% FT: TODO merge this into the untargeted section
%\input{turncoat}
\section{Attribute Inference Attacks}
\label{sec:ai}

Our results in \Cref{sec:mi} show that data poisoning can significantly increase an adversary's ability to infer \emph{membership} of training data. We now turn to attacks that infer actual \emph{data}. We begin by considering \emph{attribute inference attacks} in this section, and consider canary extraction attacks on language models in the next section.

In an attribute inference attack, the adversary has \emph{partial} knowledge of some training example $x$, and abuses access to a trained model to infer unknown features of this example.
For simplicity of exposition, we consider the case of inferring a binary attribute (e.g., whether a user is married or not), given knowledge of the other features of $x$, and of the class label $y$. In the context of our privacy game, \Cref{game:privacy_poison}, the universe $\mathcal{U}$ consists of the two possible ``versions'' of a target example, $z^0=(x^0, y), z^1=(x^1, y)$, where $x^i$ denotes the target example with value $i$ for the unknown attribute.

\subsection{Attack Setup}
\label{ssec:ai_attacks}
We start from the state-of-the-art attribute inference attack of~\cite{mehnaz2022your}.
Given a trained model $f_\theta$, this attack computes the losses for both versions of the target, $\ell(f_\theta(x^0), y)$ and $\ell(f_\theta(x^1), y)$ and picks the attribute value with lowest loss.

Similarly to many prior membership inference attacks, this attack does not account for different examples having  different loss distributions. Indeed, for some examples 
the losses $\ell(f_\theta(x^0), y)$ and $\ell(f_\theta(x^1), y)$ are very similar, while for other examples the distributions are easier to distinguish.
Following recent advances in membership inference attacks~\cite{carlini2021membership, sablayrolles2019white, watson2021importance, ye2021enhanced} we thus design a stronger attack that uses shadow models to calibrate the losses of different examples.
We train $N$ shadow models, such that the two versions $(x^i, y)$ of the targeted example each appear in the training set of half the shadow models.
For each set of models, we then compute the \emph{difference} between the loss on either version of $x$:
\begin{align*}
{L}_0 &= \left\{ \ell(f(x^0), y) -  \ell(f(x^1), y)  \ :\ f \text{ trained on } (x^0,y)\right\} \ , \\
{L}_1 &=  \left\{ \ell(f(x^0), y) -  \ell(f(x^1), y) \ :\ f \text{ trained on } (x^1,y)\right\} \;.
\end{align*}
As in the attack of~\cite{carlini2021membership}, we fit Gaussian distributions to $L_0$ and $L_1$. Given a target model $f_\theta$, we compute the difference in losses $\ell(f_\theta(x^0), y) -  \ell(f_\theta(x^1), y)$ and perform a likelihood-ratio test between the two Gaussians.
This attack acts as our baseline.

To then improve on this with poisoning, we inject $r/2$ mislabelled samples of the form $(x^0, y')$, and $r/2$ of the form $(x^1, y')$ into the training set. 
Mislabeling both versions of the target forces the model to have similarly large loss on either version. The true variant of the target sample will then have a large influence on one of these losses, which will be detectable by our attack.

For completeness, we also consider an ``imputation'' baseline~\cite{mehnaz2022your} that infers the unknown attribute \emph{from the data distribution alone}. That is, given samples $(x, y)$ from the distribution $\mathbb{D}$, we train a model to infer the value of one attribute of $x$, given the other attributes and class label $y$. This baseline lets us quantify how much \emph{extra} information about a sample's unknown attribute is leaked by a model trained on that sample, compared to what an adversary could infer simply from inherent correlations in the data distribution. %(e.g., if the attacker knows a user's height and weight, they can infer likely gender of some users with high accuracy). 

\subsection{Experimental Setup}
\label{ssec:ai_setup}
We run our attack on the Adult dataset~\cite{kohavi1996uci}, a tabular dataset with demographic information of $48{,}842$ users. The target model is a three-layer feedforward neural network to predict whether a user's income is above $\$50$K. The target model (and the attack's shadow models) are trained on a random 50\% of the full Adult dataset. Our models achieve $84\%$ test accuracy.
We consider attribute inference attacks that infer either a user's stated gender, or relationship status (after binarizing this feature into ``married'' and ``not married'' as in~\cite{mehnaz2022your}). We define the attributes ``female'' and ``not married'' as the positive class in each case (i.e., a true-positive corresponds to the attacker correctly guessing that a user is female, or not married).

We pick 500 target points at random, and train $10$ target models that contain these 500 points in their training sets. We further train 128 shadow models on training sets that contain these 500 targets with the unknown attribute chosen at random. The training sets of the target models and shadow models are augmented with the adversary's poisoned dataset $D_{\text{adv}}$ that contains $r \in \{1, 2, 4, 8, 16\}$ mislabelled copies of each target.

To evaluate the imputation baseline, we train the same three-layer feedforward neural network to predict gender (or relationship status) given a user's other features and class label. We train this model on the entire Adult dataset except for the 500 target points.

\begin{figure}[t]
    \centering
    \includegraphics[width=0.8\columnwidth]{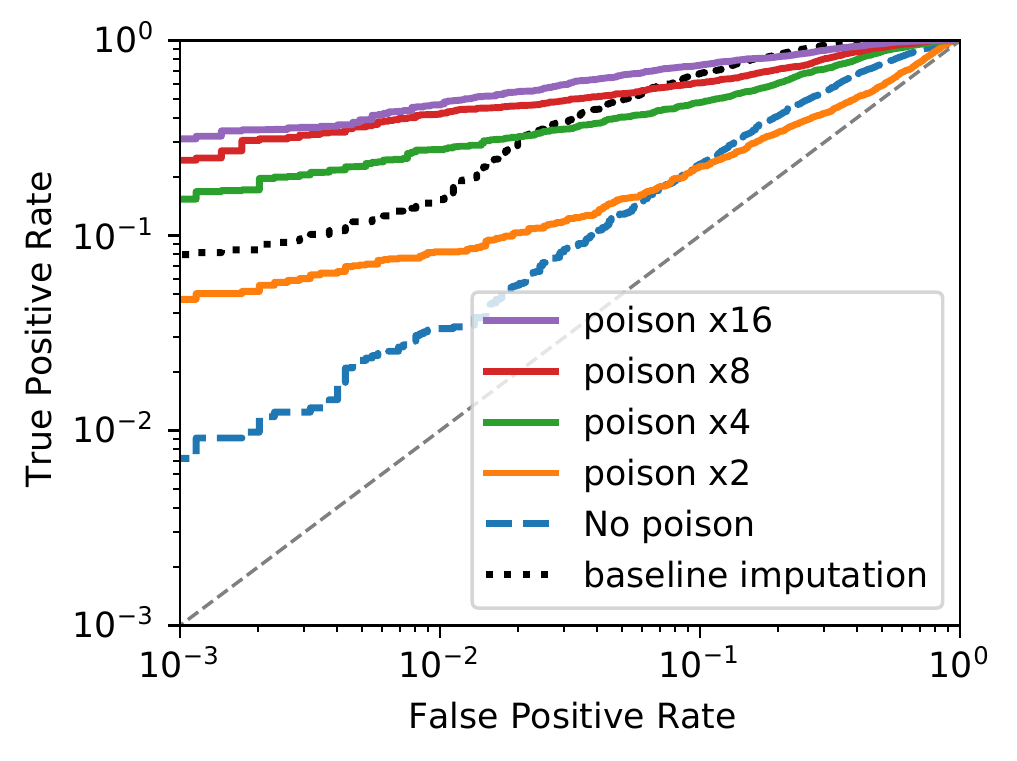}
    \caption{Targeted poisoning attacks boost attribute inference on Adult. Without poisoning, the attack \cite{mehnaz2022your} performs worse than a baseline imputation that infers gender based on correlation statistics. With $r \geq 4$ poisons, our improved attack  surpasses the baseline for the first time.}
    \label{fig:ai_adult_gender}
\end{figure}

\subsection{Results}
\label{ssec:ai_results}
Our attack for attributing a user's stated gender is plotted in \Cref{fig:ai_adult_gender}. Results for inferring relationship status are in \Cref{fig:ai_adult_married}.
As for membership inference, poisoning significantly improves attribute inference. At a FPR of 0.1\%, the attack of~\cite{mehnaz2022your} has a TPR of 1\%, while our attack with 16 poisons gets a TPR of 30\%.
Conversely, to achieve a TPR of 50\%, the attack without poisoning incurs a FPR of 39\%, while our attack with 16 poisons has a FPR of 1.2\%---\textbf{an error reduction of $\mathbf{33\times}$}.
In particular, the attack without poisoning performs \emph{worse} than the trivial imputation baseline. Access to a non-poisoned model thus does not appear to leak more private information than what can be inferred from the data distribution.

\section{Extraction in Language Models}
\label{sec:lm}

In the previous sections, we focused on attacks that infer a \emph{single bit} of information---whether an example is a member or not (in \Cref{sec:mi}), or the value of some binary attribute of the example (in \Cref{sec:ai}). 
We now consider the more ambitious goal of inferring secrets with much higher entropy. Following \cite{carlini2019secret}, we aim to extract well-formatted secrets (e.g., credit card numbers, social security numbers, etc.) from a language model trained on an unlabeled text corpus.
Language models are a prime target for poisoning attacks, as their training datasets are often minimally curated~\cite{bender2021dangers, schuster2021you}.

As in \cite{carlini2019secret}, we inject \emph{canaries} into the training dataset of a language model, and then evaluate whether an attacker (who may poison part of the dataset) can recover the secret canary. Our canaries take the form $s = \text{``}\texttt{Prefix} \dottedcircle\dottedcircle\dottedcircle\dottedcircle\dottedcircle\dottedcircle\text{''}$, where \texttt{Prefix} is an arbitrary string that is known (fully or partially) to the attacker, followed by a random 6-digit number.\footnote{We limit ourselves to 6-digit secrets which allows us to efficiently enumerate over all possible secret values when computing the secret's \emph{exposure}. As in \cite{carlini2019secret}, we could also consider longer secrets and approximate exposure by sampling.} This setup mirrors a scenario where a secret number appears in a standardized context known to the adversary (e.g., a PIN inserted in an HTML input form).

We train small variants of the GPT-2 model~\cite{radford2019language} on the WikiText-2 dataset~\cite{merity2016pointer}, a standard language modeling corpus of approximately 3 million tokens of text from Wikipedia.
We inject a canary into this dataset with a 125-token prefix followed by a random 6-digit secret (125 tokens represent about 500 characters; we also consider adversaries with partial knowledge of the prefix in \Cref{ssec:lm_relaxed}). Given a trained model $f_\theta$, the attacker prompts the model with the prefix followed by all $10^6$ possible values of the canary, and ranks them according to the model's loss. Following \cite{carlini2019secret}, we compute the \emph{exposure} of the secret as the average number of bits leaked to the adversary (see \Cref{eq:exposure}).
To control the randomness from the choice of random prefix and of random secret, we inject 45 different canaries into a model, and train 45 target models (for a total of $45^2=2025$ different prefix-secret combinations). We then measure the average exposure across all combinations.

We consider two poisoning attack strategies to increase exposure of a secret canary, that rely on different adversarial capabilities:

\begin{enumerate}
    \item \emph{Prefix poisoning} assumes that the adversary can select the \texttt{Prefix} string that precedes the secret canary. This threat model captures settings where the attacker can select a template in which user secrets are input. Alternatively, since training sets for language models are often constructed by \emph{concatenating} all of a user's text sources, this attack could be instantiated by having the attacker send a message to the victim before the victim writes some secret information.
    \item \emph{Suffix poisoning} assumes that the adversary knows the \texttt{Prefix} string preceding the canary, but cannot necessarily modify it. Here, the adversary inserts poisoned copies of the \texttt{Prefix} with a chosen suffix into the training data. 
\end{enumerate}

As we will see, both types of poisoning attacks significantly increase the exposure of canaries. An attacker that combines both forms of attack can reduce their guesswork to recover canaries by a factor of $39\times$, compared to a baseline attack without poisoning.

\subsection{Canary Extraction with Calibration}
\label{ssec:lm_calibration}

We again begin by showing that existing canary extraction attacks can be significantly improved by appropriately \emph{calibrating} the attack using shadow models (again, similar to state-of-the-art membership inference attacks~\cite{sablayrolles2019white, carlini2021membership, ye2021enhanced, watson2021importance, long2020pragmatic}).

\begin{figure}[t]
    \centering
    \includegraphics[width=0.8\columnwidth]{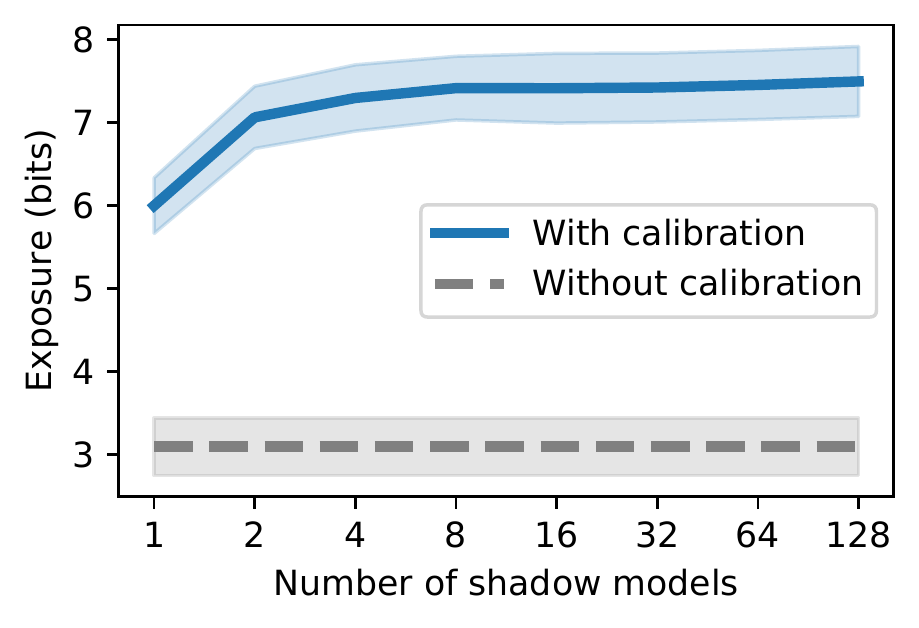}
    \vspace{-0.5em}
    \caption{Calibration with shadow models significantly improves the attack of~\cite{carlini2019secret} for extracting canaries.}
    \label{fig:lm_calibration}
\end{figure}

As a baseline, we run the attack of \cite{carlini2019secret}, which simply ranks all possible canary values according to the target model's loss.
We find that this attack achieves only a low canary exposure of 3.1 bits on average in our setting (i.e, the adversary learns less than 1 digit of the secret).%
\footnote{Carlini et al.~\cite{carlini2019secret} report higher exposures for numeric secrets because they use worse models (LSTMs) trained on simpler datasets that contain very few numbers.}
We find that even though the model's loss on the random 6-digit secret does decrease throughout training, there are many other 6-digit numbers that are a priori much more likely and that therefore yield lower losses (such as \texttt{000000}, or \texttt{123456}).

The issue here is again one of \emph{calibration}. 
Any language model trained on a large dataset will tend to assign higher likelihood to the number \texttt{123456} than to, say, the number \texttt{418463}. However, a model trained with the canary \texttt{418463} will have a comparatively much higher confidence in this canary than a language model that was \emph{not} trained on this specific canary.

As we did with our membership and attribute inference attacks, we thus first train a number of \emph{shadow models}. We train $N$ shadow models $g_i$ on random subsets of WikiText (without any inserted canaries). Then, for a target model $f_\theta$, prefix $p$ and canary guess $c$, we assign to $c$ the calibrated confidence:
\begin{equation}
    \log f_\theta(p + c) - \frac{1}{N} \sum_{i=1}^N \log g_i(p+c) \;.
\end{equation}
A potential canary value such as \texttt{123456} will have a low calibrated score, as all models assign it high confidence. In contrast, the true canary (e.g., \texttt{418643}) will have high calibrated confidence as only the target model $f_\theta$ assigns a moderately high confidence to it. 
We then compute exposure exactly as in \Cref{eq:exposure}, with possible canary values ranked according to their calibrated confidence.

\Cref{fig:lm_calibration} shows that the use of shadow models vastly increases exposure of canaries. \textbf{With just 2 shadow models, we obtain an average exposure of $\mathbf{7.1}$ bits, a reduction in guesswork of $\mathbf{16\times}$ compared to a non-calibrated attack.} With additional shadow models, the exposure increases moderately to $7.4$ bits.
Conversely, the fraction of canaries recovered in fewer than 100 guesses increases from 0.1\% to 10\% with calibration (an improvement of $100\times$).

\subsection{Prefix Poisoning}
\label{ssec:lm_prefix}

\begin{figure}[t]
    \centering
    \includegraphics[width=0.8\columnwidth]{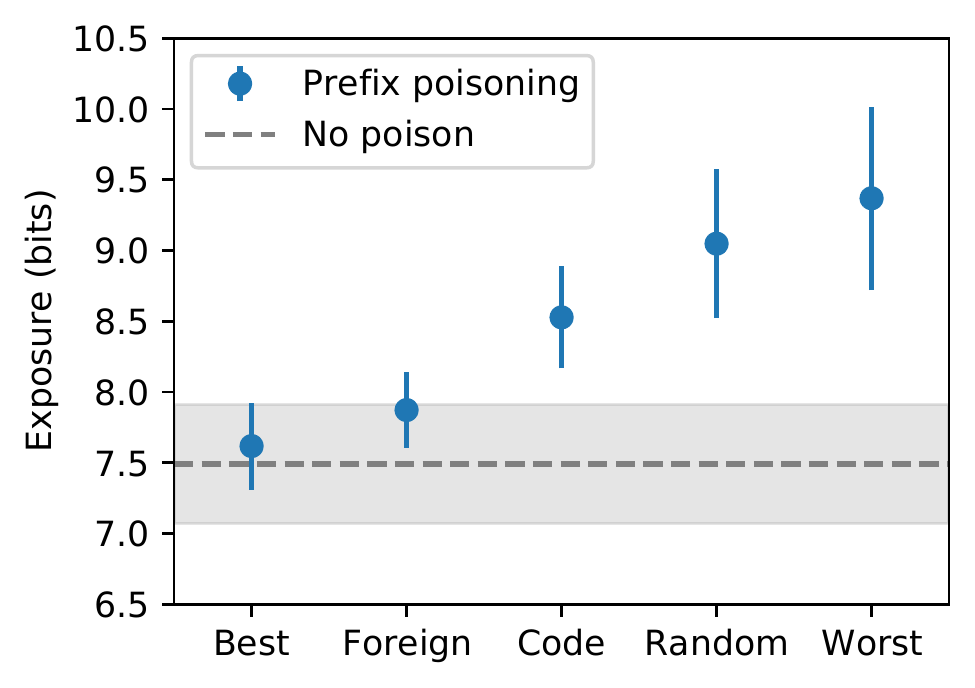}
    \caption{Secret canaries are easier to extract if the adversary can force them to appear in an out-of-distribution context. Canaries that appear after a piece of source code (Code), uniformly random tokens (Random), or a sequence of tokens with worst-case loss (Worst) are significantly more exposed than canaries that appear after random WikiText sentences (No poison baseline). Inserting canaries after sentences in a non-English language (Foreign) or a sequence of tokens with best-case loss (Best) does not increase exposure.}
    \label{fig:lm_prefix}
\end{figure}

The first poisoning attack we consider is one where the adversary can \emph{choose} the prefix that precedes the secret canary.
We evaluate the impact of various out-of-distribution prefix choices on the exposure of the secrets that succeed them. We pick five prefixes each from the following distributions: 
\begin{itemize}
    \item \emph{Foreign}: the prefix is in a language other than English, with a non-Latin alphabet: Chinese, Japanese, Russian, Hebrew or Arabic.
    \item \emph{Code}: the prefix is a piece of source code (in \texttt{JavaScript}, \texttt{Java}, \texttt{C}, \texttt{Haskell} or \texttt{Rust}).
    \item \emph{Random}: tokens sampled from GPT-2's vocabulary.
    \item \emph{Best}: an initial random token prefix followed by greedily sampling the most likely token from a pretrained model.
    \item \emph{Worst}: an initial random token prefix followed by greedily sampling the least-likely token from a pretrained model.
\end{itemize}

\Cref{fig:lm_prefix} shows that canaries that appear in ``difficult'' contexts (where the model has difficulty predicting the next token) have much higher exposure than canaries that appear in ``easy'' contexts.\footnote{The non-English languages we chose do appear in some Wikipedia articles included in WikiText.}

% FT: not sure how much I trust the analysis below
\iffalse
We hypothesize that a good prefix (i.e., a prefix that maximizes canary leakage) is one that places the model in a context where it does not expect to see a number. As a result, when the prefix \emph{is} followed by a number, this causes a large loss and thus a large update to the model's weights.

We give credence to this hypothesis in \Cref{fig:lm_analysis}. For each prefix, we measure the model's average loss on \emph{any} 6-digit number that follows the prefix. We then compare this average loss for models that contain a canary with that prefix (IN models) or not (OUT models). We find that the difference in losses between IN and OUT models for a given prefix is highly predictive of the exposure of canaries that follow that prefix.

\begin{figure}[t]
    \centering
    \includegraphics[width=0.8\columnwidth]{}
    \caption{good prefixes increase the difference in loss on number sequences}
    \label{fig:lm_analysis}
\end{figure}
\fi

\subsection{Suffix Poisoning}
\label{ssec:lm_suffix}

While the ability to choose or influence a secret value's prefix may exist in some settings, it is a strong assumption on the adversary.
We thus now turn to a more general setting where the prefix preceding a secret canary is \emph{fixed} and out-of-control of the attacker.

We consider attacks inspired by the mislabeling attacks that were successful for membership inference and attribute inference. Yet, as language models are unsupervised, we cannot ``mislabel'' a sentence. Instead, we propose a \emph{suffix poisoning} attack that inserts the known prefix followed by an arbitrary suffix many times into the dataset (thereby ``mislabeling'' the tokens that succeed the prefix, i.e., the canary). The attack's rationale is that the poisoned model will have an extremely low confidence in any value for the canary, thus maximizing the relative influence of the true canary (similarly to how our MI attack poisons the model to have very low confidence in the true label, to maximize the influence of the targeted point).

We repeat the prefix $1 \leq r \leq 128$ times, padded by a stream of zeros (we consider other, less effective suffix choices in \Cref{apx:lm_suffix}).
\Cref{fig:lm_zeros} shows the success rate of the attack. Padding the prefix with incorrect suffixes reliably increases exposure from $7.4$ bits to $9.2$ bits after $64$ poison insertions ($0.3\%$ of the dataset size).

Finally, we consider a powerful attacker that combines both our prefix-poisoning and suffix-poisoning strategies, by first choosing an out-of-distribution prefix that will precede the secret canary, and further inserting this prefix padded by zeros $r$ times into the training data. This attack increases exposure to $11.4$ bits on average with $64$ poison insertions. For half of the canaries, the attacker finds the secret in fewer than $230$ guesses, compared to $9{,}018$ guesses without poisoning---\textbf{an improvement of $\mathbf{39\times}$.}
Conversely, the proportion of canaries that the attacker can recover with at most 100 guesses increases from 10\% without poisoning to 42\% with poisoning.

\begin{figure}[t]
    \centering
    \includegraphics[width=0.8\columnwidth]{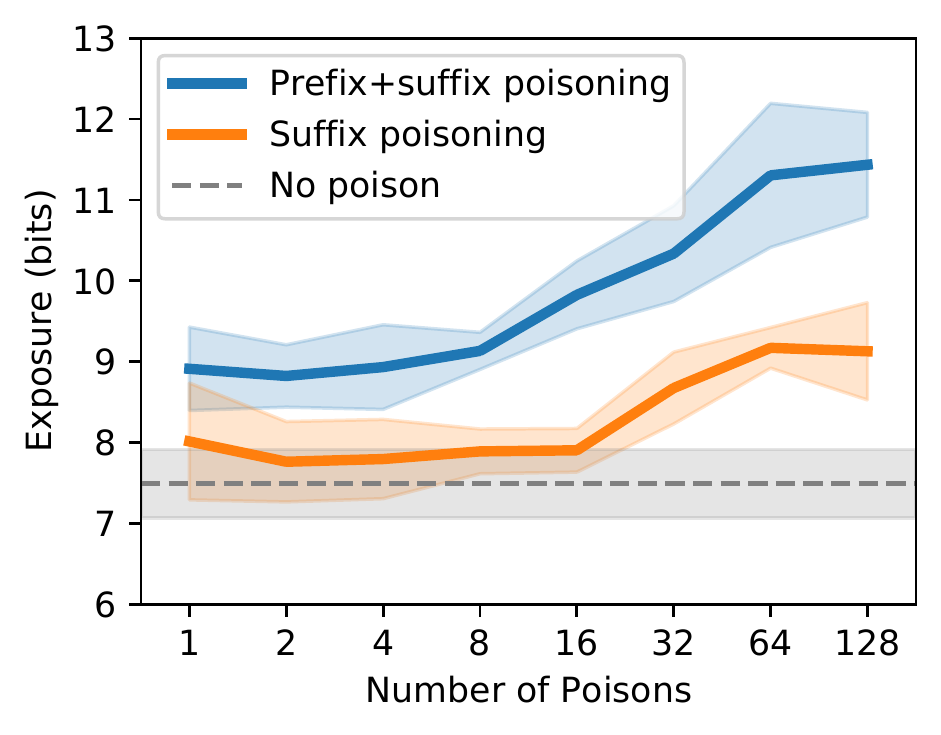}
    \caption{Canaries are easier to extract if the model has high confidence in an incorrect continuation of the prefix. We insert the prefix padded with zeros $1 \leq r \leq 128$ times into the training set to increase exposure. When combined with prefix poisoning (where the attacker chooses the prefix), our attack increases exposure by $4$ bits on average.}
    \label{fig:lm_zeros}
\end{figure}

\subsection{Attacks with Relaxed Capabilities}
\label{ssec:lm_relaxed}

The language model poisoning attacks we evaluated so far assumed that (1) the adversary knows the entire prefix that precedes a canary; (2) the adversary has the ability to train shadow models. Below, we relax both of these assumptions in turn.

\paragraph{Partial knowledge of the prefix.}
In \Cref{fig:lm_prefix_length}, we measure exposure as a function of the number of tokens of the \texttt{Prefix} string known to the attacker. We assume the attacker knows the $n$ last tokens of the prefix (about $4n$ characters) immediately preceding the canary. The attacker thus queries the model with only these $n$ tokens of known context to extract a canary. Moreover, when poisoning the model, the attacker has the ability to choose the $n$ last tokens of the prefix, and to insert them together with an arbitrary suffix $64$ times into the dataset.
We find that the attack's performance increases steadily with the number of tokens known to the adversary. This mirrors the findings of Carlini et al.~\cite{carlini2022quantifying}, who show that prompting a language model with longer prefixes increases the likelihood of extracting memorized content. 
As long as the attacker knows more than $n=8$ tokens of context (6 English words on average), they can increase exposure of secrets by poisoning the model.

\begin{figure}[t]
    \centering
    \includegraphics[width=0.8\columnwidth]{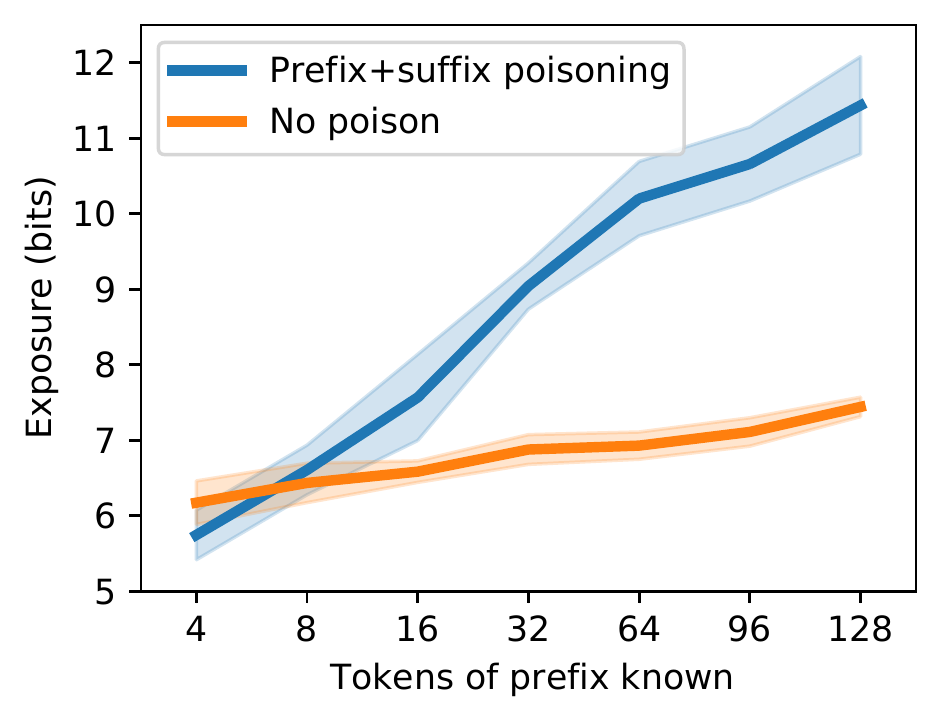}
    \caption{Our privacy-poisoning attack performs better, the more context is known to the adversary. We run our attack in a setting where the adversary knows the last $k$ tokens immediately preceding the secret canary. Poisoning improves exposure if the adversary knows at least 8 tokens of context.}
    \label{fig:lm_prefix_length}
\end{figure}

\paragraph{Attacks without shadow models.}
In \Cref{ssec:lm_calibration} we showed that canary extraction attacks are significantly improved if the adversary has the ability to train \emph{shadow models} that closely mimic the behavior of the target model.

This assumption is standard in the literature on privacy attacks~\cite{shokri2016membership, sablayrolles2019white, watson2021importance, ye2021enhanced, long2020pragmatic, carlini2021membership}, and we show that as few as 2 shadow models provide nearly the same benefit as >100 models.
Yet, even training a single shadow model might be excessively expensive for very large language models (prior work has suggested that existing public language models could be used as proxies for shadow models~\cite{carlini2020extracting}).
In contrast, the ability to poison a large language model's training set may be more accessible, especially since these models are typically trained on large minimally curated data sources~\cite{bender2021dangers, schuster2021you}.

We find that poisoning significantly boosts exposure even if the attacker cannot train any shadow models and uses the baseline attack of \cite{carlini2019secret}. 
Interestingly, \textbf{the ability to poison the dataset provides roughly the same benefit as the ability to train shadow models}: with either ability, exposure increases from $3.1$ bits to $7.3$ and $7.4$ bits respectively---a reduction in average guesswork of $18$-$20\times$.
Combining both abilities (i.e., poisoning the target model and training shadow models) compounds to an additional $16\times$ decrease in average guesswork (an average exposure of $11.4$ bits).

%\begin{table}[t]
%    \centering
%    \caption{Both poisoning a training dataset and training shadow models increase the exposure compared to the original attack of~\cite{carlini2019secret}; both increases exposure even more.}
%    \begin{tabular}{@{} l c r r @{}}
%         && \multicolumn{2}{c}{Shadow models}  \\
%         \cmidrule(l{2pt}){3-4}
%         && No & Yes\\
%         \toprule
%         \multirow{2}{*}{Poisoning} & No & 3.1 & 7.4 \\
%         & Yes & 7.3 & 11.4 \\
%         \bottomrule
%    \end{tabular}
%    \label{tab:poisoning_vs_calibration}
%\end{table}
\section{Discussion and Conclusion}

We introduce a new attack on machine learning where an adversary poisons a training set to harm the privacy of other users' data. For membership inference, attribute inference, and data extraction, we show how attacks can tamper with training data (as little as \textless$0.1\%$) to increase privacy leakage by one or two orders-of-magnitude. 

By blurring the lines between ``worst-case'' and ``average-case'' privacy leakage in deep neural networks, our attacks have various implications, discussed below, for the privacy expectations of users and protocol designers in collaborative learning settings.\\[-.5\baselineskip]

\noindent\emph{Untrusted data is not only a threat to integrity.}
Large neural networks are trained on massive datasets which are hard to curate.
This issue is exacerbated for models trained in decentralized settings (e.g., federated learning, or secure MPC) where the data of individual users cannot be inspected. Prior work observes that protecting model \emph{integrity} is challenging in such settings~\cite{biggio2012poisoning, jagielski2018manipulating, munoz2017towards, shafahi2018poison, suciu2018does, geiping2020witches, bhagoji2019analyzing, bagdasaryan2020backdoor}.
Our work highlights a new, orthogonal threat to the \emph{privacy} of the model's training data, when part of the training data is adversarial.
\added{Thus, even in settings where threats to \emph{model integrity} are not a primary concern, model developers who care about \emph{privacy} may still need to account for poisoning attacks and defend against them.}\\[-.5\baselineskip]

\noindent\emph{Neural networks are poor ``ideal functionalities''.}
There is a line of work that collaboratively trains ML models using secure multiparty computation (MPC) protocols~\cite{mohassel2017secureml, aono2017privacy, mohassel2018aby3, wagh2019securenn}. These protocols are guaranteed to leak nothing more than an \emph{ideal functionality} that computes the desired function~\cite{yao1982protocols, goldreich1987play}. 
Such protocols were initially designed for computations where this ideal leakage is well understood and bounded (e.g., in Yao's \emph{millionaires problem}~\cite{yao1982protocols}, the function always leaks exactly \emph{one bit} of information).
Yet, for flexible functions such as neural networks, the ideal leakage is much harder to characterize and bound~\cite{shokri2016membership,carlini2019secret, carlini2020extracting}.
Worse, our work demonstrates that \textbf{an adversary that honestly follows the protocol can \emph{increase} the amount of information leaked by the ideal functionality, solely by modifying their inputs}. 
Thus, the security model of MPC fails to characterize all malicious strategies that breach users' privacy in collaborative learning scenarios.\\[-.5\baselineskip]

\noindent\emph{Worst-case privacy guarantees matter to everyone.}
Prior work has found that it is mainly outliers that are at risk of privacy attacks~\cite{yeom2018privacy,carlini2021membership, feldman2020does, ye2021enhanced}. Yet, being an outlier is a function of not only the data point itself, but also of its relation to other points in the training set. Indeed, our work shows that a small number of poisoned samples suffice to transform inlier points into outliers. As such, \textbf{our attacks reduce the ``average-case'' privacy leakage towards the ``worst-case'' leakage.} Our results imply that methods that \emph{audit} privacy with average-case canaries~\cite{carlini2019secret, thakkar2021understanding, ramaswamy2020training, zanella2020analyzing, malek2021antipodes} might underestimate the actual worst-case leakage under a small poisoning attack,
and worst-case auditing approaches \cite{jagielski2020auditing,nasr2021adversary} might more accurately measure a model's privacy for most users.\\[-.5\baselineskip]
%While privacy-poisoning attacks can be provably mitigated with differential privacy, in practice the bounds on privacy leakage remain much tighter for outliers than for inliers~\cite{nasr2021adversary}. As a result, our poisoning attacks might still succeed in loosening the empirical privacy guarantees for inliers, even after differential privacy is applied.

%Concretely, suppose we train a differentially private model with a fairly large privacy budget (e.g., $\epsilon \geq 10$), as is customary today. Then, a poisoning attacks might ensure that this privacy bound becomes tight for specifically targeted points.

%Our attacks are empirically mitigated by simple loss clipping, which has also been observed in \cite{carlini2019secret}. Future work may consider designing stronger privacy attacks which circumvent this defense. 

Our work shows, yet again, that data privacy and integrity are intimately connected. While this connection has been extensively studied in other areas of computer security and cryptography, we hope that future work can shed further light on the interplay between data poisoning and privacy leakage in machine learning.

%\input{defenses}

%\section{Conclusion}
%\label{sec:conclusion}
\section*{Acknowledgments}

We thank Alina Oprea, Harsh Chaudhari, Martin Strobel, Abhradeep Thakurta, Thomas Steinke, and Andreas Terzis for helpful discussions and feedback.

Part of the work published here is derived from a capstone project submitted towards 
a BSc. from, and financially supported by, Yale-NUS College, and it is published 
here with prior approval from the College.

\iffalse
\section*{Summary of Contributions}

\emph{Authors ordered reverse-alphabetically.}

\begin{itemize}[leftmargin=1.4em]
    \item Florian, Reza and Sanghyun proposed the research question of increasing privacy leakage with poisoning.
    %\item XX framed the threat model. FT: I think we all contributed to this
    \item Florian, Reza, and Ayrton formalized the privacy games.
    \item Matthew developed the formal attacks on kNNs and SVMs.
    \item Florian, Ayrton and Hoang developed and ran experiments on targeted membership inference attacks.
    \item Ayrton and Hoang developed and ran experiments on untargeted membership inference attacks.
    \item Florian and Matthew developed and ran experiments on attribute inference attacks.
    \item Florian developed and ran experiments on extraction attacks on language models.
    \item Florian, Reza, Ayrton, Hoang, Matthew, Sanghyun, and Nicholas wrote and edited the paper.
    \item Reza, Sanghyun and Nicholas advised on the experimental design and analysis of results.
\end{itemize}
\fi

\bibliographystyle{plain}
\balance
\bibliography{references}

\clearpage
\appendix

\section{Additional Experiments for Membership Inference Attacks}

\subsection{Results on CIFAR-100}

In \Cref{fig:mi_cifar100}, we replicate the experiment from \Cref{ssec:mi_targeted} on CIFAR-100.
The experimental setup is exactly the same as on CIFAR-10. 

\begin{figure}[H]
    \centering
    \includegraphics[width=0.8\columnwidth]{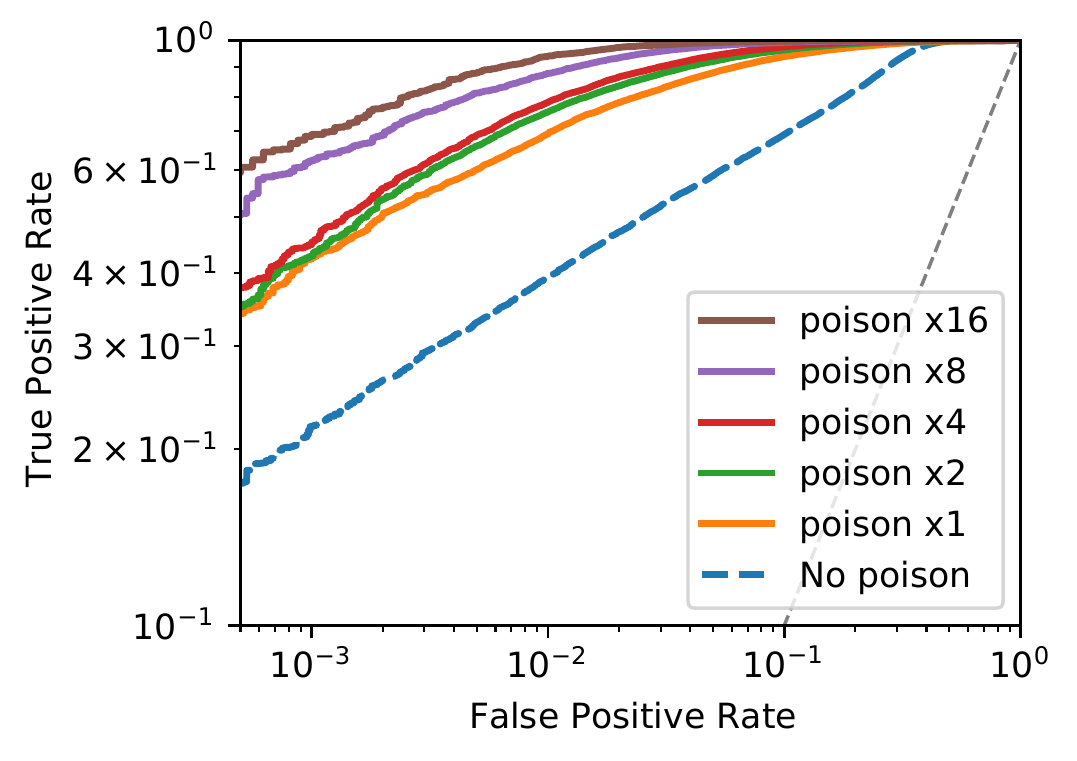}
    \caption{Targeted poisoning attacks boost membership inference on CIFAR-100. For 250 random data points, we insert $1$ to $16$ mislabelled copies of the point into the training set, and run the MI attack of~\cite{carlini2021membership} with 128 shadow models.}
    \label{fig:mi_cifar100}
\end{figure}

In \Cref{fig:mi_cifar100_target}, we replicate the experiment in \Cref{fig:mi_cifar10_target}, where we vary the choice of target class for mislabelled poisons. As for CIFAR-10, we mislabel the $r$ poisons per target as: (1) the same random incorrect class for each of the $r$ samples (random); (2) the most likely incorrect class (best); the least-likely class (worst); or a different random incorrect class for each of the $r$ poisoned copies (random-multi).

Similarly to CIFAR-10, we find that the choice of random label matters little as long as it is used consistently for all $r$ poisons, with the random strategy performing best.

\begin{figure}[H]
    \centering
    \includegraphics[width=0.8\columnwidth]{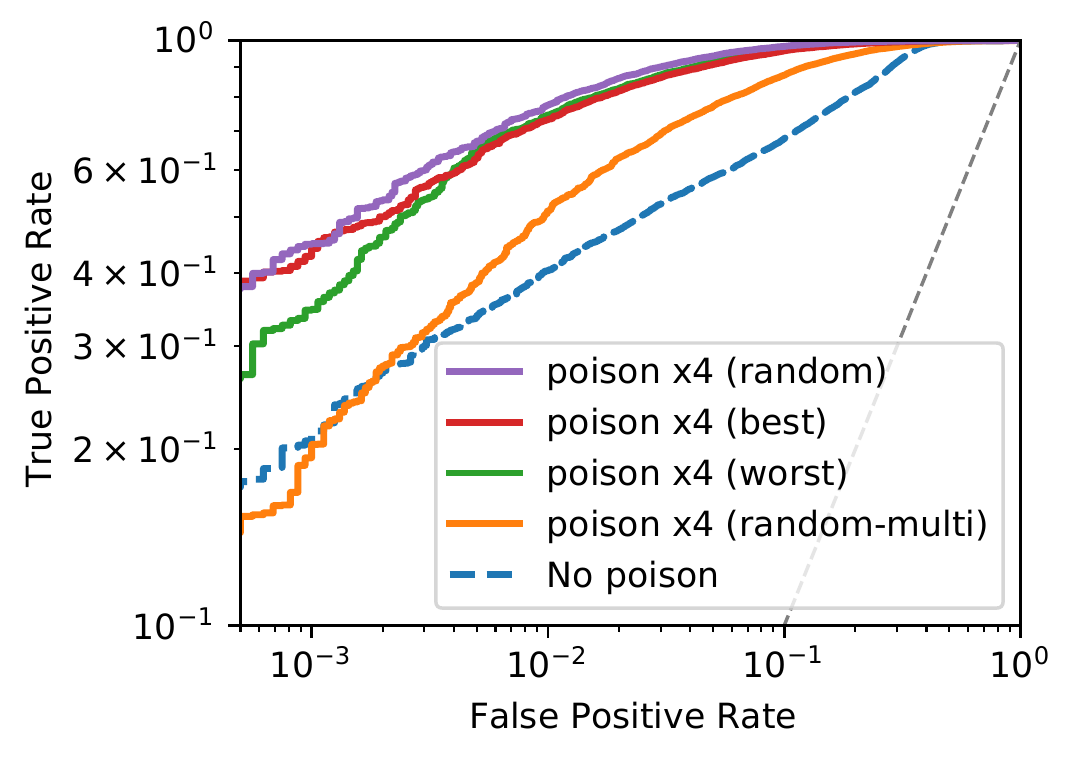}
    \caption{Comparison of mislabelling strategies on CIFAR-100. Assigning the same random incorrect label to the 4 poisoned copies of the target performs better than mislabeling as the 2nd most likely class (best) or the least likely class (worst). Assigning each of the 4 copies a different incorrect label (random-multi) reduces the MI attack success rate.}
    \label{fig:mi_cifar100_target}
\end{figure}

\subsection{Attacks That Modify the Target}
\label{apx:mi_adv}

In this section, we consider alternative poisoning strategies that also modify the target sample $x$, and not just the class label $y$. All strategies we considered performed worse than our baseline strategy than mislabels the exact sample $x$ (``exact'' in \Cref{fig:mi_cifar10_adv}).

We first consider strategies that mimic the \emph{polytope} poisoning strategy of~\cite{zhu2019transferable}, which ``surrounds'' the target example with mislabeled samples in feature space. While the original attack does this to enhance the transferability of clean-label poisoning attacks, our aim is instead to maximize the influence of the targeted example when it is a member. To this end, instead of adding $r$ identical mislabeled copies of $x$ into the training set, we instead add $r$ mislabeled \emph{noisy} versions of $x$, or $r$ mislabeled augmentations of $x$ (e.g., rotations and shifts). \Cref{fig:mi_cifar10_adv} shows that both strategies perform worse than our baseline attack (for $r=8$ poisons per target).

\begin{figure}[H]
    \centering
    \includegraphics[width=0.8\columnwidth]{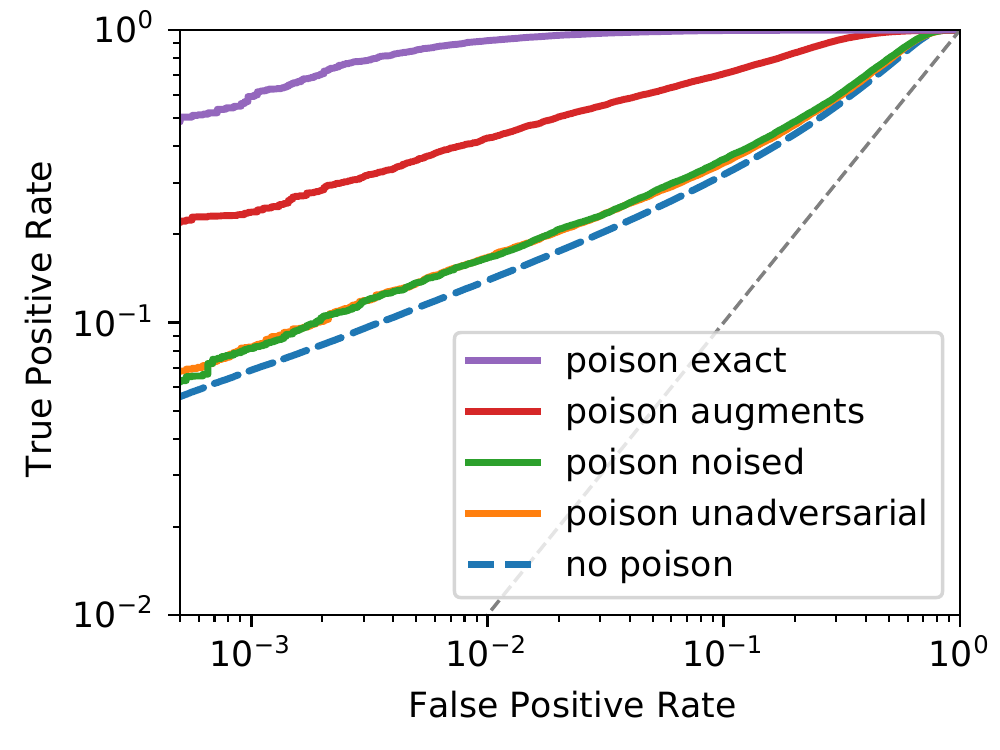}
    \caption{Mislabeling the exact target $x$ performs better than poisoning strategies that modify the target $x$, with data augmentations, Gaussian noise, or unadversarial examples. Each attack adds $r=8$ poisons per target.}
    \label{fig:mi_cifar10_adv}
\end{figure}

We consider an additional strategy, that replaces the sample $x$ by an \emph{unadversarial example}~\cite{salman2021unadversarial} for $x$. That is, given an example $(x, y)$ we
construct a sample $\hat{x}$ that is very close to $x$, so that a trained model labels $\hat{x}$ as class $y$ with maximal confidence. We then use $r$ mislabeled copies of this unadversarial example, $(\hat{x}, y')$ as our poisons. Our aim with this attack is to force the model to mislabel a variant of the target $x$ that the model is maximally confident in---in the hope that this would maximize the influence of the correctly labeled target. Unfortunately, we find that this strategy also performs much worse than our baseline strategy that simply mislabels the exact target $x$.

To generate an unadversarial example~\cite{salman2021unadversarial} for $(x, y)$, we pick a model pre-trained on CIFAR-10, and use the PGD attack of~\cite{madry2017towards} to find an example $\hat{x}$ that \emph{minimize} the model's loss $\ell(f(\hat{x}), y)$ under the constraint $\|\hat{x} - x\|_\infty \leq \frac{8}{255}$. We run PGD for 200 steps. To improve the transferability of the unadversarial example, we use a target model that consists of an ensemble of $20$ different Wide ResNets pre-trained on random subsets of CIFAR-10.

\subsection{Attacks with Partial Knowledge of the Target}
\label{apx:mi_influence}

In this section, we evaluate our attack when the adversary has only partial knowledge of the targeted example. Specifically, the adversary does not know the exact CIFAR-10 image $x$ that is (potentially) used to train a model, but only a ``similar'' image $\hat{x}$.

To choose pairs of similar images $x \approx \hat{x}$, we extract features from the entire CIFAR-10 training set using CLIP~\cite{radford2021learning} and match each example $x$ with its nearest neighbor $\hat{x}$ in feature space. Random examples of such pairs are shown in \Cref{fig:near_dups}. 
These pairs often correspond to the same object pictured under different angles or scales, and thus reasonably emulate a scenario where the attacker knows the targeted object, but not the exact picture of it that was used to train the model.

\begin{figure}[H]
    \centering
    \includegraphics[width=0.9\columnwidth]{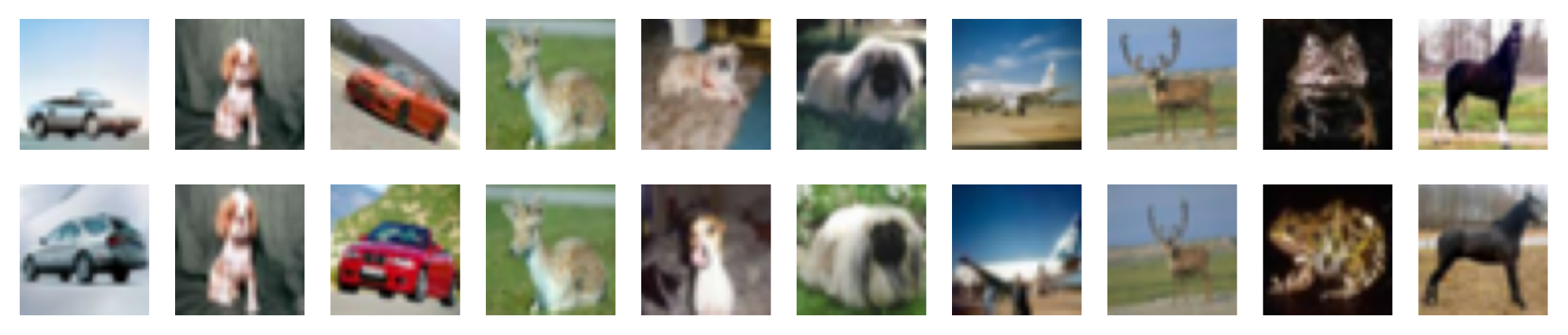}
    \caption{Examples of near neighbors in CIFAR-10 used for the attack in \Cref{fig:mi_cifar10_influence}.}
    \label{fig:near_dups}
\end{figure}

To evaluate the attack, we train $M$ target models, half of which are trained on a particular target image $x$. We ensure that none of these target models are trained on the neighbor image $\hat{x}$ that is known to the adversary. The adversary then trains $N$ shadow models, half of which are trained on the image $\hat{x}$ that is known to the adversary. We similarly ensure than none of the shadow models are trained on the real target $x$. Using the shadow models, the adversary then models the distribution of losses of $\hat{x}$ when it is a member and when it is not, as described in \Cref{ssec:mi_setup}. Finally, the adversary queries the target models \emph{on the known image $\hat{x}$} and guesses whether it was a member or not (of course, $\hat{x}$ is \emph{never} a member of the target model, but we use the adversary's guess as a proxy for guessing the membership of the real, unknown target $x$).

The attack results are in \Cref{fig:mi_cifar10_influence}. We find that the membership inference attack of~\cite{carlini2021membership}, with or without poisoning, is robust to an adversary with only partial knowledge of the target. 

\begin{figure}[H]
    \centering
    \includegraphics[width=0.8\columnwidth]{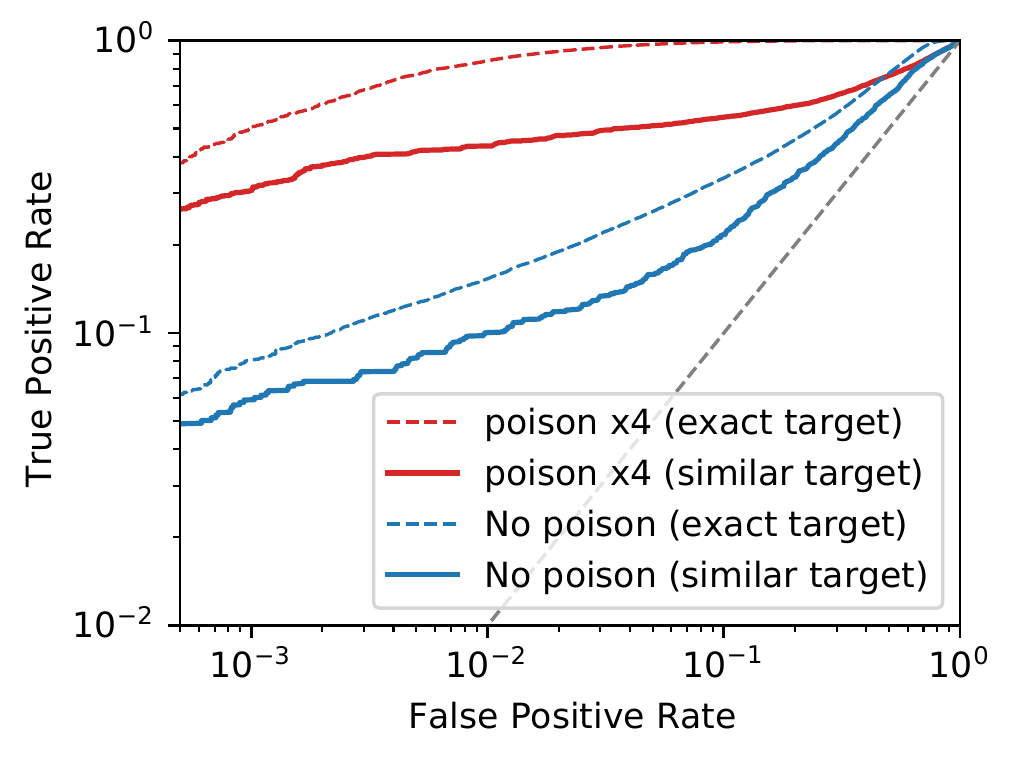}
    \caption{Our MI attack (with 4 poisons) works on CIFAR-10 even when the adversary does not know the exact target, but only a near neighbor.}
    \label{fig:mi_cifar10_influence}
\end{figure}

\subsection{Bounding Outlier Influence with Loss Clipping}
\label{apx:mi_clipping}

In \Cref{fig:mi_cifar10_dists_CE_clipped}, we show the distribution of losses for individual CIFAR-10 examples, for models trained with loss clipping (see \Cref{ssec:mi_ablation_clipping}). Similarly to \Cref{fig:mi_cifar10_dists_CE}, we find that poisoning shifts the model's losses because the poisoned model becomes less confidence in the target example. However, poisoning does not help in making the distributions more separable. On the contrary, as we increase the number of poisons, even examples that were originally easy to infer membership on become hard to distinguish.

\begin{figure}[H]
    \centering
    \includegraphics[width=0.8\columnwidth]{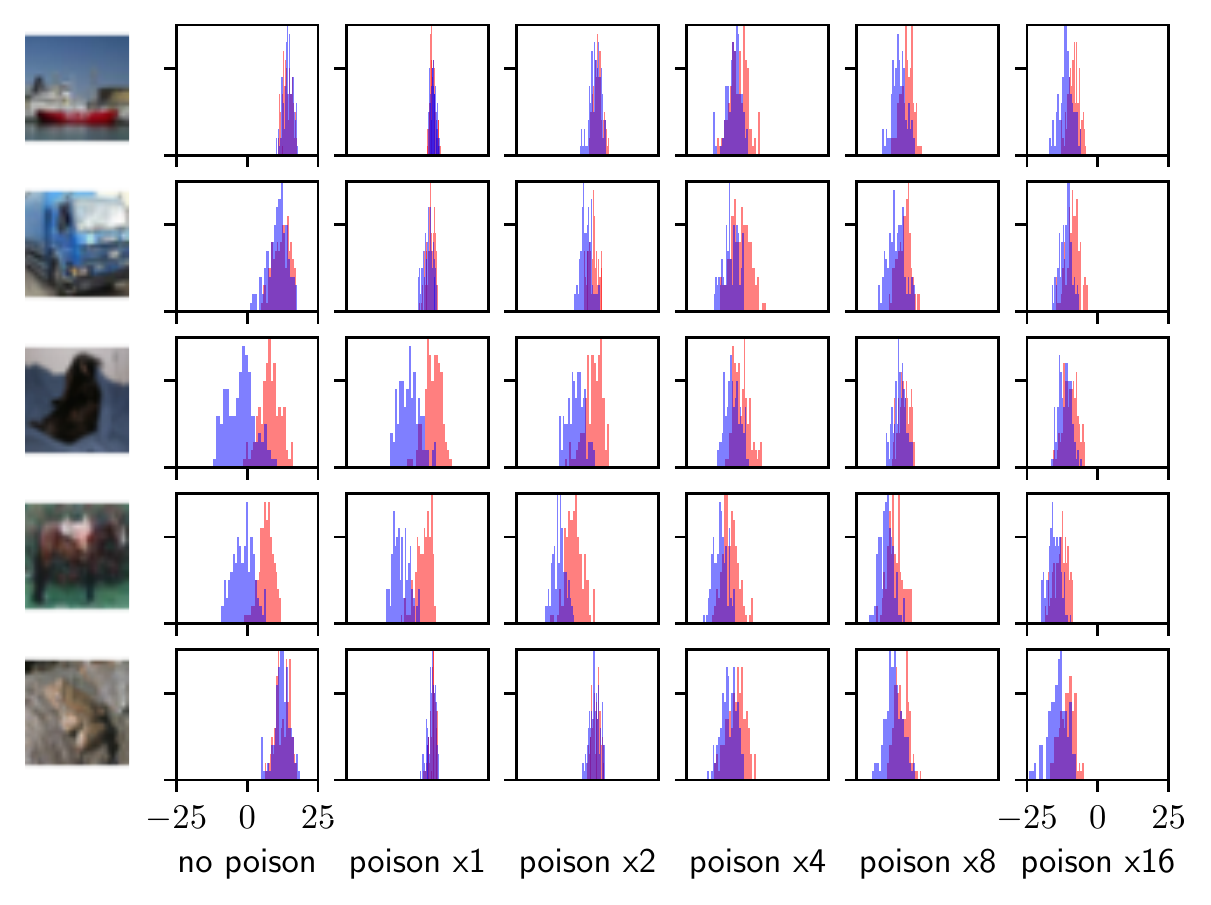}
    \caption{For models trained with clipped losses, poisoning shifts the loss distributions of members (red) and non-members (blue), but does not make them more separable.}
    \label{fig:mi_cifar10_dists_CE_clipped}
\end{figure}

%In \Cref{fig:cifar10_clipped}, we extend the experiment from \Cref{fig:mi_cifar10_clipped_1} where we train models with losses clipped to a maximal value of $C \in \{\infty, 1, 0.1, 0.01, 0.001\}$.
%As the loss clipping decreases, the MI attack success rate drops steadily and is unaffected by poisoning (with $r=1$ poison copies per target). However, aggressive clipping also leads to a consistent drop in test accuracy.
%When losses are clipped to a very low value of $C=0.001$, the MI attack without poisoning performs only slightly better than chance ($0.3\%$ TPR at a FPR of $0.1\%$), while the attack with poisoning achieves a $3\times$ larger TPR of $9\%$.

\subsection{Untargeted Membership Inference Attacks}
\label{apx:mi_untargeted}
% \begin{table}[]
% \begin{tabular}{lcccc}
%                  & Clean   & \begin{tabular}[c]{@{}c@{}}Single-class \\ label flipping\end{tabular} & \begin{tabular}[c]{@{}c@{}}Random\\ label flipping\end{tabular} & \begin{tabular}[c]{@{}c@{}}Next class \\ label flipping\end{tabular} \\
% TPR at FPR=0.1\% & 8.83\%  & 16.14\%                                                                & 9.82\%                                                          & 6.58\%                                                               \\
% TPR at FPR=1\%   & 16.67\% & 34.73\%                                                                & 29.68\%                                                         & 20.3\%                                                               \\
% FPR at TPR=50\%  & 22.56\% & 3\%                                                                    & 3.39\%                                                          & 6.55\%                                                              
% \end{tabular}
% \caption{Results with different poisoning strategies on CIFAR-10 dataset}
% \label{tab:untargeted_cifar10_diff_poison}
% \end{table}
\paragraph{Alternative strategies and datasets.}
%In figure \Cref{fig:untargeted_cifar100}, we replicate the untargeted attack from \Cref{ssec:mi_untargeted} on CIFAR-100.
%As for CIFAR-10, we find that if the adversary mislabels their entire half of the dataset consistently into a single class, membersip inference on the other half of the data increases.

%\begin{figure}[H]
%    \centering
%    \includegraphics[width=0.8\columnwidth]{figures/2PC/untargeted_cifar100_appendix.pdf}
%    \caption{An untargeted poisoning attack that consistently mislabels 50\% of the CIFAR-100 training data increases membership inference across all other data points.}
%    \label{fig:untargeted_cifar100}
%\end{figure}

In \Cref{fig:untargeted_cifar10_different_poison}, \Cref{fig:untargeted_cifar100_diff_poisons}, and \Cref{fig:svm_texas100}, we show the results of different untargeted poisoning strategies on CIFAR-10 and CIFAR-100, as well as for an SVM classifier trained on the Texas100 dataset (see~\cite{shokri2016membership} for details on this dataset).

\begin{figure}[H]
    \centering
    \includegraphics[width=0.8\columnwidth]{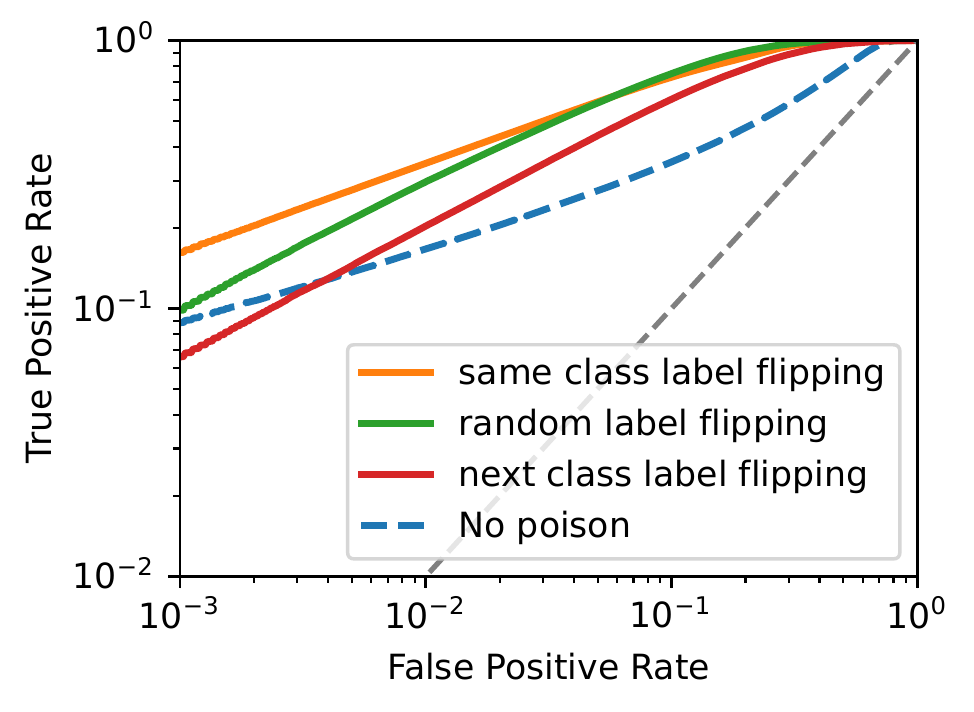}
    \caption{Comparison of untargeted poisoning attacks on CIFAR-10.}
    \label{fig:untargeted_cifar10_different_poison}
\end{figure}
\begin{figure}[H]
    \centering
    \includegraphics[width=0.8\columnwidth]{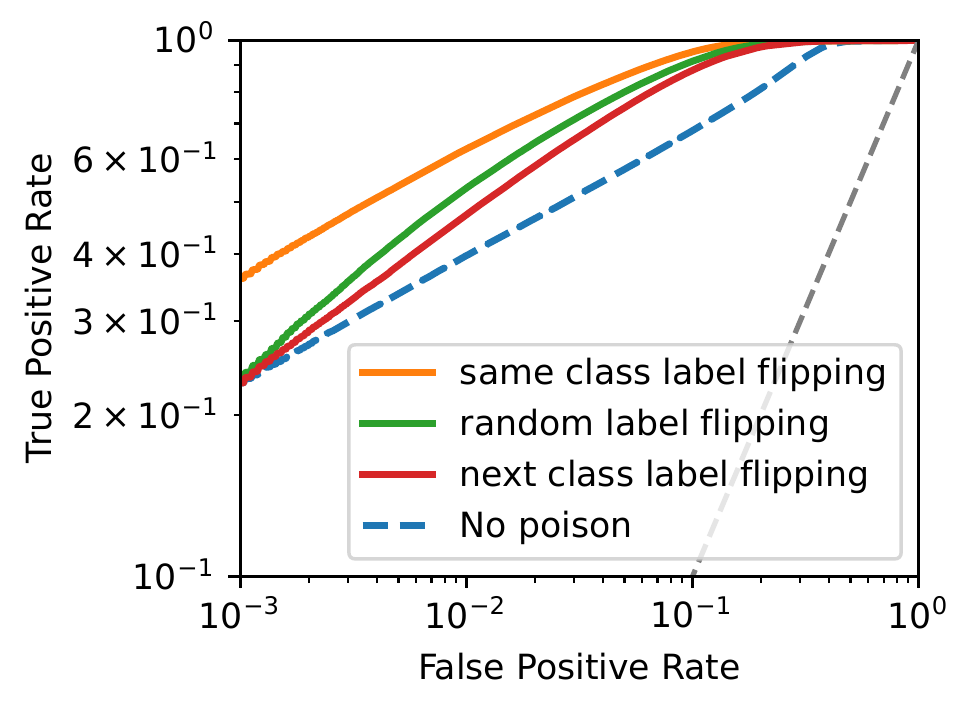}
    \caption{Comparison of untargeted poisoning attacks on CIFAR-100.}
    \label{fig:untargeted_cifar100_diff_poisons}
\end{figure}
%\begin{figure}[H]
%    \centering
%    \includegraphics[width=0.8\columnwidth]{}
%    \caption{Different untargeted poisoning attack on the SVHN increases success rate of MI attack from~\cite{carlini2021membership}. The attack was carried out with 128 shadow models.}
%    \label{fig:untargeted_svhn}
%\end{figure}
\begin{figure}[H]
    \centering
    \includegraphics[width=0.8\columnwidth]{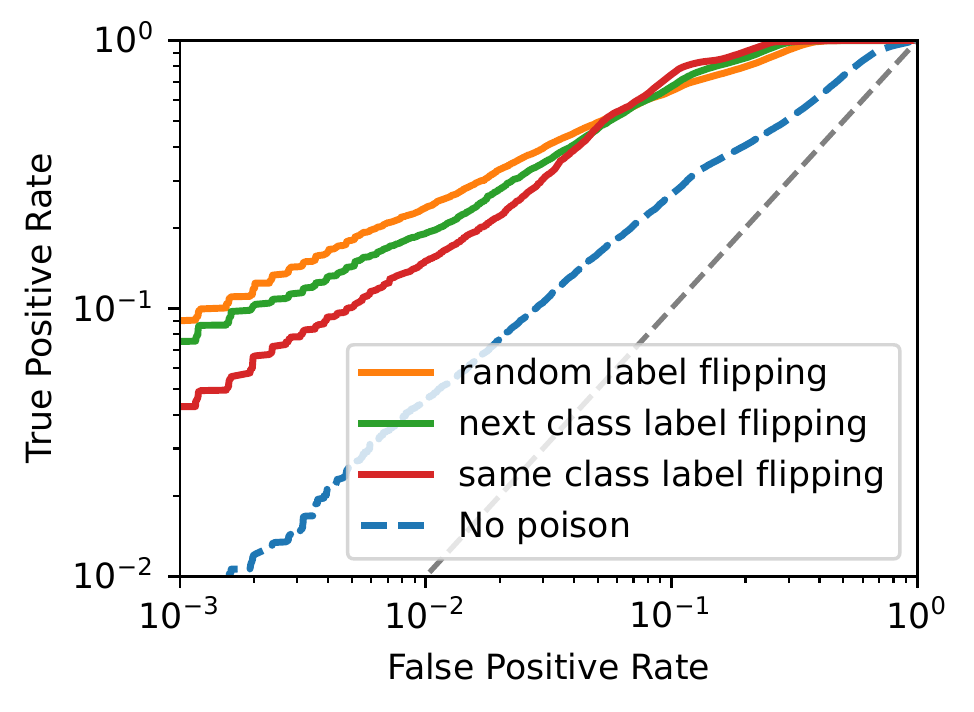}
    \caption{Comparison of untargeted poisoning attacks on Texas100.}
    \label{fig:svm_texas100}
\end{figure}

The best-performing strategy on CIFAR-10 and CIFAR-100, \emph{same class label flipping}, mislabels all of the adversary's points into a single class.
We consider two alternative untargeted poisoning strategies: \emph{random label flipping} where each of the adversary's points is randomly mislabeled into an incorrect class, and \emph{next class label flipping} where the adversary mislabels each example $(x, y)$ into the next class $(x, y+1 \mod |\mathcal{Y}|)$.
On both CIFAR-10 and CIFAR-100, consistently mislabelling all poisoned examples into the same class results in the strongest attack. On the Texas100 dataset, simply mislabeling the adversary's data at random performs slightly better.

On CIFAR-10, we also experimented with strategies where the adversary's share of the data is \emph{out-of-distribution}, e.g., by using randomly mislabeled images from CIFAR-100 or MNIST, or simply images that consist of random noise. However, we could not find a poisoning strategy that performed as well as consistently mislabelling \emph{in-distribution} data.

\paragraph{Distribution of confidences.}
%%% [Hoang]: to do explanation section for the untargeted attack. 
Similarly to the targeted attack, the untargeted poisoning attack also makes the MI attack easier by making individual examples' confidence distributions more separable. In~\Cref{fig:per_example_loss}, we pick five random CIFAR-10 examples and plot the logit-scaled confidence of the data point when it is a member (red) and not a member (blue). In the unpoisoned model (leftmost column), the two distributions overlap for most examples. With an untargeted poisoning attack, the confidences decrease and the distributions become more separable, which makes membership inference easier. 

\begin{figure}[H]
    \centering
    \includegraphics[width=0.7\columnwidth]{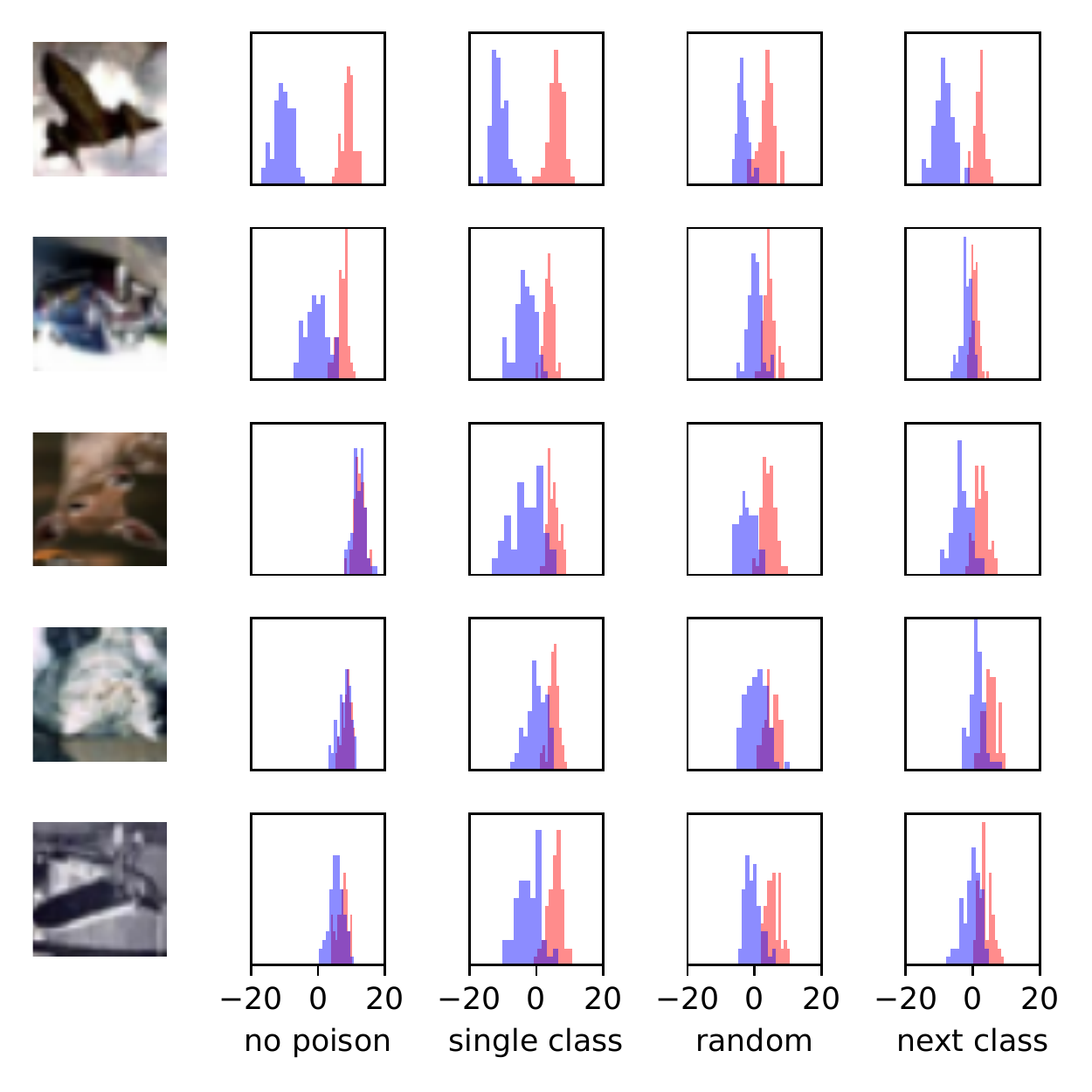}
    \caption{Untargeted poisoning makes the loss distributions of members and non-members easier to distinguish. For five randomly chosen data points, we show the distribution of models' losses (in logit scale) on that example when it is a member (red) and when it is not (blue). The x-axis shows different types of untargeted poisoning strategies.}
    \label{fig:per_example_loss}
\end{figure}
% FT: the figure below does not use our attack with fixed variance for a low number of shadow models
%\begin{figure}[H]
%    \centering
%    \includegraphics[width=\linewidth]{figures/2PC/cifar10_wideresnet_online_diff_n_shadow_model.pdf}
%    \caption{Even with the most effective data poisoning scheme, the attacker cannot increase privacy leakage with fewer shadow models.}
%    \label{fig:d}
%\end{figure}
%% Hoang: since it's an untargeted attack, I also provide another graph that looks at all of the victim's data, instead of a few examples. Maybe we can ignore one of them.
% Plotting out the logit-scaled loss of all victim's data (red) and non-member (blue), we can see the effect of the poisoning on []. Without poisoning, the loss distribution of members and non-member overlap greatly
% \begin{figure}[H]
%     \centering
%     \includegraphics[width=0.8\columnwidth]{figures/2PC/cifar10_online.pdf}
%     \caption{Untargeted poisoning shifts the loss distribution, which makes MI attack easier. We plot out the logit-scaled loss of all members, in red, and non-members, in blue from a victim's model. Different strategies indicate how the adversary flips the labels of her own data to craft the poison. Each row indicates the percentage of the attacker's data is label-flipped. With more poison, the loss distribution is more spreadable.}
%     \label{fig:e}
% \end{figure}

\paragraph{Disparate impact of untargeted poisoning.}

%\Cref{fig:mi_cifar10_untargeted_extremes}
To examine which points are most vulnerable to the untargeted poisoning attack, we perform the same analysis as in \Cref{sssec:extreme_points}. We first pick out the 5\% of least- and most-vulnerable points for a set of models trained without poisoning. We then run an MI attack on both types of points (for a new set of models) with and without poisoning in~\Cref{fig:mi_cifar10_untargeted_extremes}. Untargeted poisoning does not significantly affect the points that were initially most vulnerable. For the points that are hardest to attack without poisoning, our untargeted attack increases the TPR at a 0.1\% FPR from $0.1\%$ to $3.7\%$---an improvement of $37 \times$.
\begin{figure}[H]
    \centering
    \includegraphics[width=0.7\columnwidth]{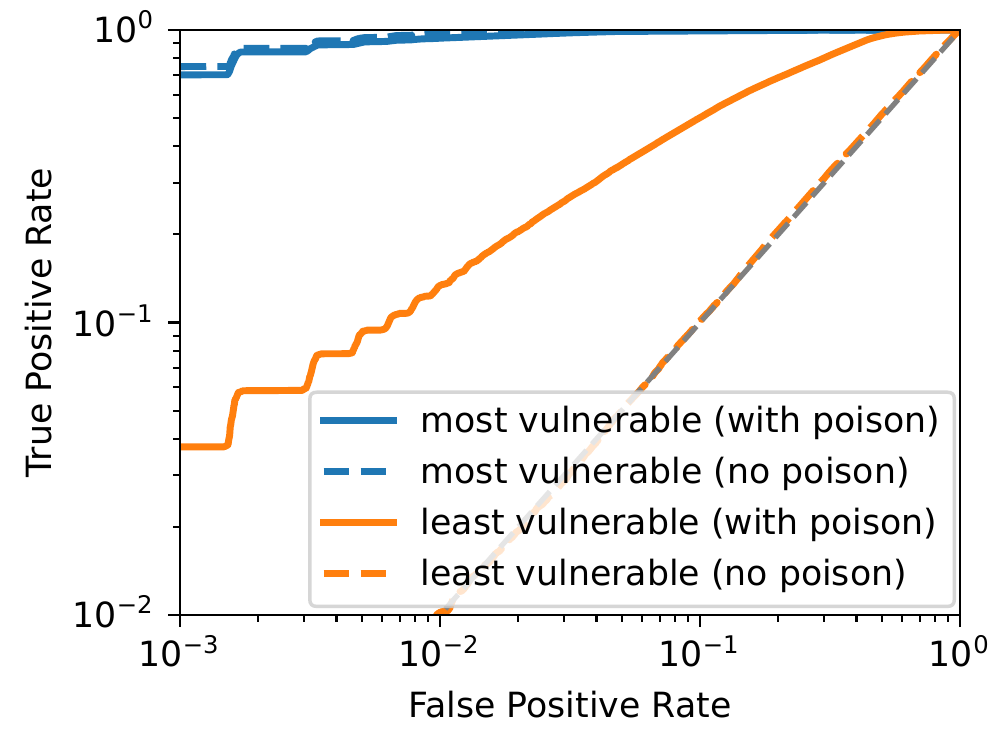}
    \caption{Untargeted poisoning causes previously-safe data points to become vulnerable.
    While poisoning has little effect on the most vulnerable points, poisoning the least vulnerable points improves the TPR at a 0.1\% FPR by $37\times$.}
    \label{fig:mi_cifar10_untargeted_extremes}
\end{figure}
\clearpage
\section{Additional Experiments for Attribute Inference Attacks}

In \Cref{fig:ai_adult_married}, we replicate the experiment from \Cref{ssec:ai_results} but infer a user's relationship status (``married'' or ``non-married'') rather than their gender.
The attack and experimental setup are the same as described in \Cref{ssec:ai_setup}.

The results, shown in \Cref{fig:ai_adult_married} are qualitatively similar as those for inferring gender in \Cref{fig:ai_adult_gender}.
At a FPR of 0.1\%, the attack of~\cite{mehnaz2022your} (without poisoning) achieves a TPR of 4\%, while our attack with 16 poisons obtains a TPR of 18\%. 
At false-positive rates of \textgreater$5\%$ all the attribute inference attacks (even with poisoning) perform worse than a trivial imputation baseline.

\begin{figure}[H]
    \centering
    \includegraphics[width=0.8\columnwidth]{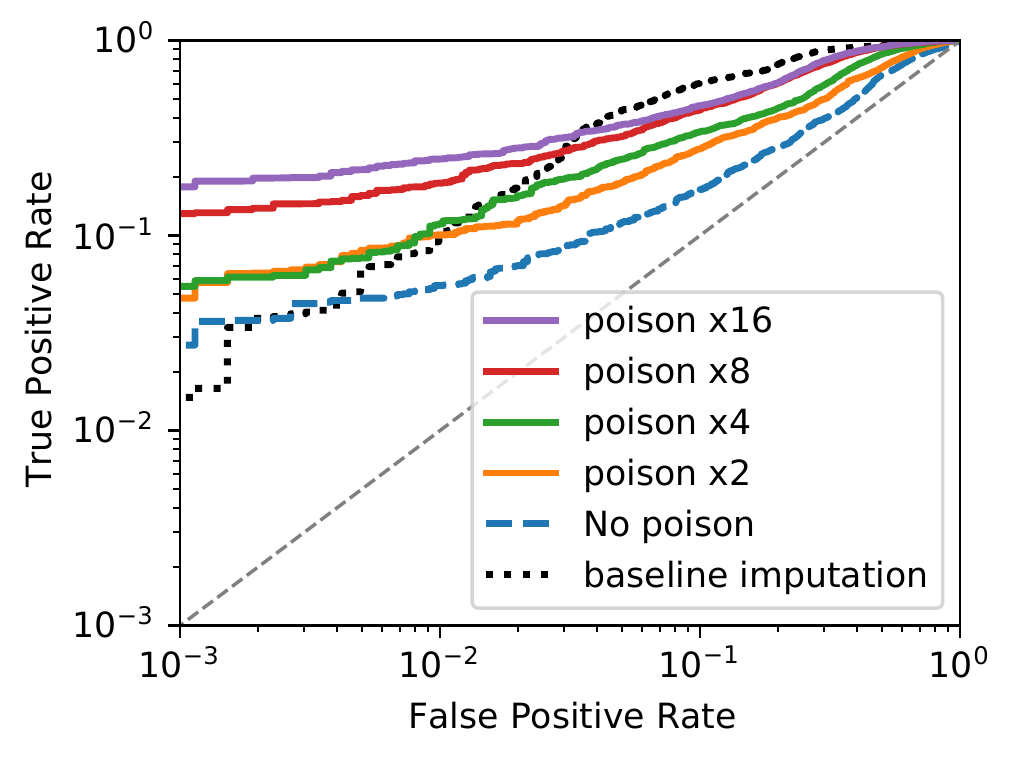}
    \caption{Targeted poisoning attacks boost attribute inference (for inferring relationship status) on Adult. Without poisoning, the attack of~\cite{mehnaz2022your} performs no better than a baseline imputation that infers relationship status based on correlations with other attributes. With poisoning, the attack significantly outperforms the baseline at low false-positives.}
    \label{fig:ai_adult_married}
\end{figure}

\section{Additional Experiments for Canary Extraction}
\label{apx:lm}

\subsection{Strategies for Suffix Poisoning}
\label{apx:lm_suffix}

In \Cref{ssec:lm_suffix} we showed that we could increase exposure after poisoning a canary's prefix by re-inserting it multiple times into the training set padded with zeros.
In \Cref{fig:lm_poison_all}, we consider alternative suffix poisoning strategies, that are ultimately less effective.
Padding the prefix with a list of random tokens or a random 6-digit number also provides a moderate increase in exposure (to $8.5$ bits and $8.3$ bits respectively), as long as the same random suffix is re-used for all poisons. If we insert the poison many times with \emph{different} random suffixes, the poisoning actually \emph{hurts} the attack. This mirrors our finding  in \Cref{fig:mi_cifar10_target} and \Cref{fig:mi_cifar100_target} that mislabeling a target point with different incorrect labels hurts MI attacks.

\begin{figure}[H]
    \centering
    \includegraphics[width=0.8\columnwidth]{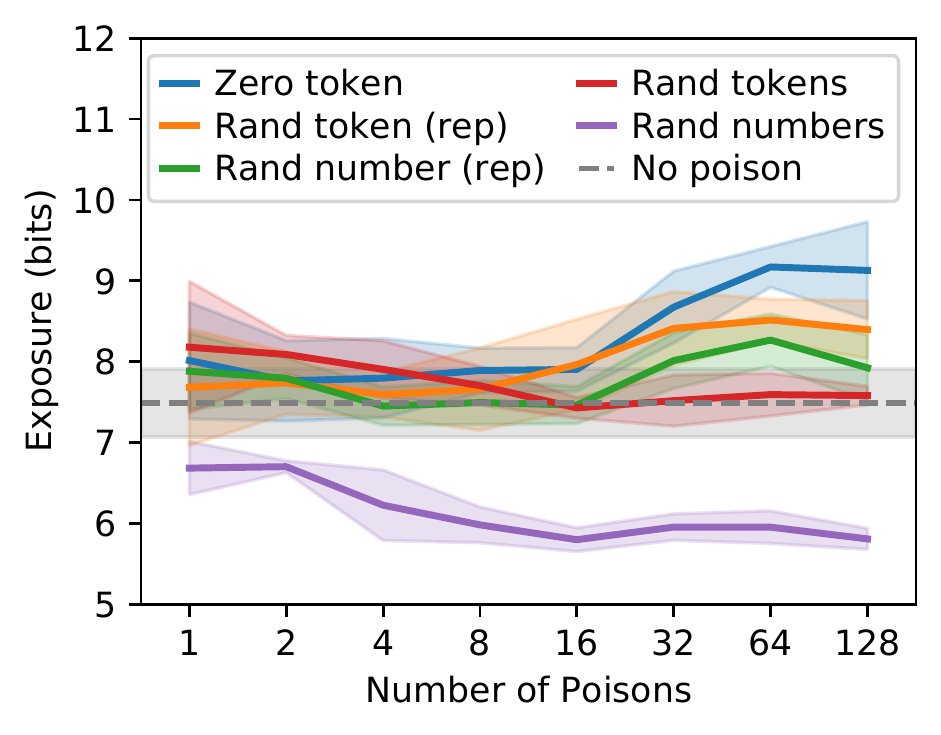}
    \caption{Poisoning a canary's prefix by padding it with zeros is more effective than alternative strategies that pad with random tokens or random 6-digit numbers. Replicating the same padding for each poisoned copy (rep) is much more effective than using a different random padding for each copy.}
    \label{fig:lm_poison_all}
\end{figure}

\subsection{Canary Extraction on a Fixed Budget}
\label{apx:lm_guesses}

In \Cref{sec:lm}, we evaluated canary extraction attacks in terms of the \emph{average} exposure of different canaries inserted into a training set. An increase in average exposure does not necessarily tell us whether the attack is making extraction of canaries more practical (e.g., an attack might allow the adversary to recover canaries that used to require $200{,}000$ guesses in ``only'' $100{,}000$ guesses, without making any difference for those canaries that can be extracted in less than $100{,}000$ guesses).
This is not the case for our attack. As we show in \Cref{fig:lm_top100}, poisoning increases the attacker's success rate in extracting canaries for any budget of guesses.
For example, if the adversary is limited to 100 guesses for a canary, their success rate grows from $10\%$ without poisoning to $41\%$ with poisoning.

\begin{figure}[H]
    \centering
    \includegraphics[width=0.8\columnwidth]{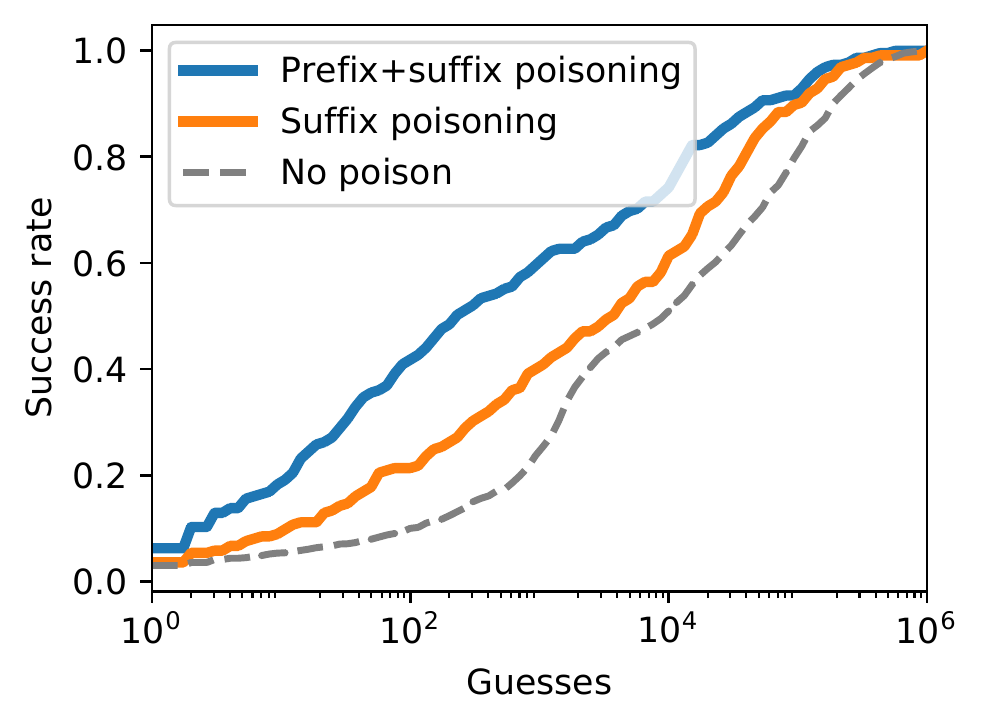}
    \caption{For any fixed budget of guesses, poisoning increases the attacker's success rate in recovering a secret canary.}
    \label{fig:lm_top100}
\end{figure}

\clearpage
\section{Provably Amplifying Privacy Leakage}
\label{apx:theory}

In this section, we provide additional theoretical analysis that proves a targeted poisoning attack can achieve perfect membership inference in the case of k-Nearest Neighbors (kNNs) and linear Support Vector Machines (SVMs).

\paragraph{Poisoning $k$-nearest neighbor classifiers.}

\begin{algorithm}
\SetAlgoLined
\DontPrintSemicolon
\KwData{Target point $(x, y)$, nearest neighbor count $k$, minimum distance $\delta$}
\SetKwFunction{kNNPoison}{\textsc{kNNPoison}}
\SetKwFunction{MI}{\textsc{MI}}

\SetKwProg{Fn}{Function}{:}{}
\Fn{\kNNPoison{$x, y, k, \delta$}}{
    Pick an incorrect label $y' \neq y$\;
    $X_{\text{adv}}=\{\underbrace{x, \dots, x}_{k-1}\}$ \tcp*{Make $k-1$ copies of x}
    $Y_{\text{adv}}=\{\underbrace{y, \dots, y}_{(k-1)/2}, \underbrace{y', \dots, y'}_{(k-1)/2}\}$ \tcp*{Evenly balance classes}
    $X_{\text{adv}} = X_{\text{adv}} \cup \{x'\}, \text{ such that } \|x - x'\| = \delta$\;
    $Y_{\text{adv}} = Y_{\text{adv}} \cup \{y'\}$ \tcp*{Mislabel next-closest sample}
    \KwRet $D_{\text{adv}} = X_{\text{adv}}, Y_{\text{adv}}$\;
}
\;
\Fn{\MI{$x, y, f_{\text{kNN}}$}}{
    $\hat{y} \gets f_{\text{kNN}}(x)$  \tcp*{Query the model (as a black-box)}
    \eIf{$\hat{y} = y$}{\KwRet ``member''}{\KwRet ``non-member''}
}
\caption{$k$-nearest neighbors poisoning}
\label{alg:knnpois}
\end{algorithm}

In \Cref{ssec:attack} we introduced a strategy that used poisoning to obtain 100\% membership inference accuracy on a targeted point for kNNs (see Algorithm~\ref{alg:knnpois}). Here, we show that poisoning is indeed \emph{necessary} to obtain such a strong attack. To make this argument, we prove that without poisoning there exist points where membership inference cannot succeed better than chance. 

We say that a point $(x,y) \in D$ is \emph{unused} by the model if the model's output on \emph{any} point is unaffected by the removal of $(x,y)$ from the training set $D$. Such points are easy to construct: e.g., consider a cluster of $\gg k$ close-by points that all the share the same label. Removing one point from the center of this cluster will not affect the model's output on any input (a simple one-dimensional example visualization is given in \Cref{fig:knn_unused}). 
For any such unused point, inferring membership is impossible: the model's input-output behavior is identical whether the model is trained on $D$ or on $D \setminus \{(x, y)\}$. However, the poisoning strategy in Algorithm~\ref{alg:knnpois} still succeeds on these points, and thus provably increases privacy leakage.

\begin{figure}[H]
    \centering
    \includegraphics[width=\linewidth]{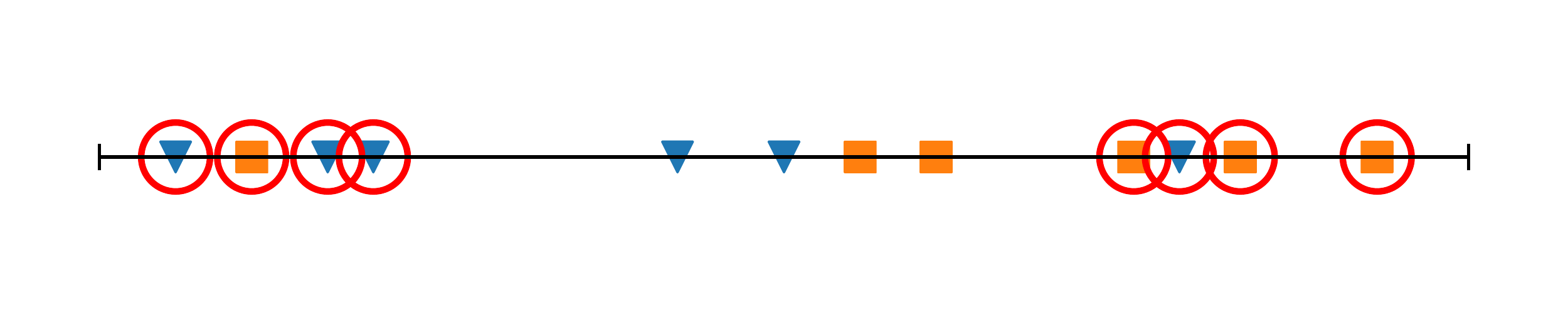}
    \vspace{-3.5em}
    \caption{Unused points in a one-dimensional $k$-nearest neighbors classifier with $k=3$. Blue triangles are from class $0$, and orange squares from class $1$. Removing one of the circle ``unused'' training samples does not affect the model's decision on any test point.}
    \label{fig:knn_unused}
\end{figure}

\paragraph{Poisoning support vector machines.}

Here, we consider an adversary who receives black-box access to a linear SVM.
%, allowing it to learn its decision boundary. Note that with white-box access, the adversary could simply receive all the support vectors, which are required to perform inference using the dual form, which is necessary for nonlinear kernels and may be useful for sparse, high dimensional linear SVMs. 
By definition, only support vectors are used at inference time and thus distinguishing between an SVM trained on $D$ and one trained on $D / \lbrace (x, y) \rbrace$ is impossible unless the sample $(x, y)$ is a support vector in at least one of these two models. 
We show that there exist points that can be forced---by a poisoning attack---to become support vectors if they are members of the training set.
By computing the distance between the poisoned points and the classifier's decision boundary (which can be done with black-box model access), the adversary can then infer with 100\% accuracy whether some targeted point was a member or not.

Unlike for $k$-nearest neighbors models, we will not be able to reveal membership for \emph{any} point. Instead, our attack can only succeed on examples that lie on the convex hull of examples from one class. (However note that in high dimensions almost all points are on the boundary of the convex hull, and almost no points are contained in the interior.)
We propose a sufficient condition for such a point to be forced to be a support vector, which we call \emph{protruding}:
\begin{definition}
For a binary classification dataset $D$, a point $(x_t, y_t)\in D$ is \emph{protruding} if there exists some $w, b_0, b_1$ so that the plane $w\cdot x + b_0=0$ linearly separates $D$ and $w\cdot x + b_1=0$ linearly separates $D / \lbrace (x_t, y_t)\rbrace \cup \lbrace (x_t, 1-y_t)\rbrace$.
\end{definition}

\begin{figure}[t]
    \centering
    \includegraphics[width=\linewidth]{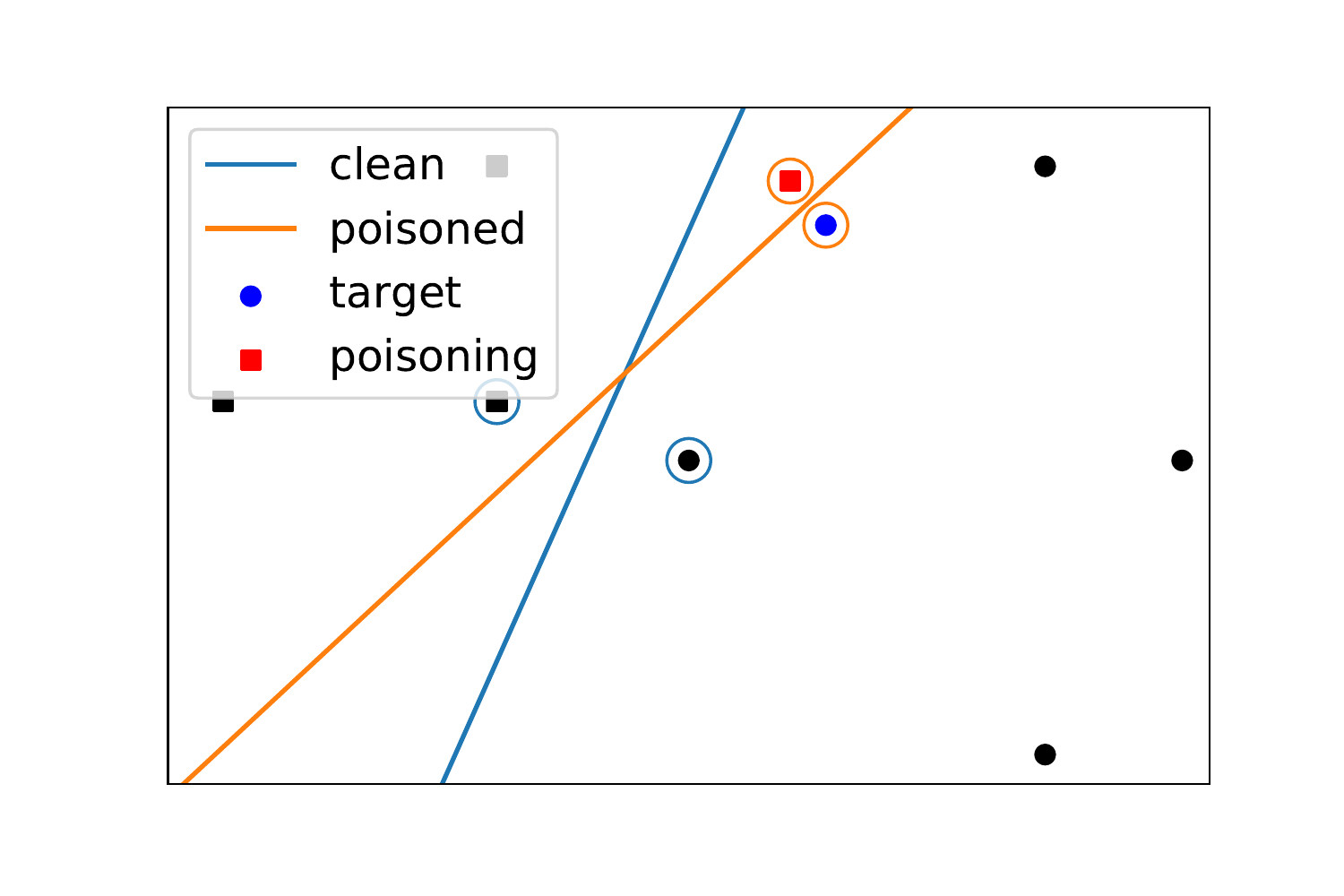}
    \vspace{-2em}
    \caption{Consider a squares versus circles classification task.
    By inserting the red square as a poison, we can now perform membership inference on the target blue circle. But because this blue circle was not a support vector before poisoning, membership inference was impossible. 
    By adding the red square poisoned point, the maximum-margin classifier shifts to the orange line, and the target becomes a support vector. (The support vectors for each model are circled in the model's color.)}
    \label{fig:svm_theory}
\end{figure}

Intuitively, this definition says that a point is protruding if there exists some way to linearly separate the two classes, such that this protruding point is the closest training example to the decision boundary. Then, if that point's label were flipped, it suffices to ``shift'' the decision boundary (i.e., by modifying the offset $b$) to linearly separate the data again.
We give an example of a protruding point (the target point) in Figure~\ref{fig:svm_theory}. If a point is protruding, we can insert a poisoned point of the opposite class close to it to force the protruding point to become a support vector.

\begin{theorem}
Let $D$ be a binary classification dataset containing a protruding point $(x_t, y_t)$. Then there exists some $(x_p, y_p)$ so that $D\cup \lbrace (x_p, y_p)\rbrace$ has $(x_t, y_t)$ as a support vector, and a larger margin when $(x_t, y_t)\notin D$. 
\end{theorem}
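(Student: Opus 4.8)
The plan is to use the protruding certificate directly to choose the poison, and then pin down the SVM's maximum margin in both the member and non-member worlds via a closest-pair argument. First I would fix the direction $w$ and offsets $b_0, b_1$ guaranteed by the protruding assumption and normalize $\|w\| = 1$; write $c \coloneqq w \cdot x_t$, and take the class $y_t$ to be the ``positive'' side, so every point of class $y_t$ satisfies $w\cdot x + b_0 > 0$ and every point of class $1-y_t$ satisfies $w\cdot x + b_0 < 0$. Comparing this with the separator $w\cdot x + b_1=0$ for the label-flipped set shows that $x_t$ is strictly extremal inside its own class along $w$: one obtains $-b_0 < c < -b_1$, every other point of class $y_t$ has $w\cdot x \ge -b_1 = c + \gamma$ for a gap $\gamma \coloneqq -b_1 - c > 0$, and every point of class $1-y_t$ has $w\cdot x < -b_0 < c$. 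I would then define the poison $x_p \coloneqq x_t - t\, w$ with label $y_p \coloneqq 1-y_t$, where $t>0$ is chosen small (the precise smallness fixed below), so that $x_p$ sits just across the future boundary from $x_t$.

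For the member world the dataset is $D \cup \{(x_p,y_p)\}$, which contains the opposite-class pair $x_t, x_p$ at distance $\|x_t - x_p\| = t$; since any separating hyperplane must place these two points on opposite sides, its margin is at most $t/2$. I would then exhibit a hyperplane meeting this bound: the plane normal to $w$ through the midpoint of $x_t$ and $x_p$, i.e.\ with threshold $c - t/2$. Checking separation reduces to three finite families of inequalities---$x_t$ and the other $y_t$-points stay above, while $x_p$ and the original opposite-class points stay below---all of which hold once $t$ is small enough that $t/2 \le c + b_0$ (guaranteed since $c + b_0 > 0$). Hence the maximum margin is exactly $t/2$ and both $x_t$ and $x_p$ lie on the margin, so $(x_t,y_t)$ is a support vector, as claimed.

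For the non-member world the dataset is $(D\setminus\{(x_t,y_t)\}) \cup \{(x_p,y_p)\}$; the tight pair $x_t, x_p$ is gone, and the nearest $y_t$-point to $x_p$ along $w$ now has $w\cdot x \ge c + \gamma$. I would place the threshold at the midpoint between $w\cdot x_p = c-t$ and this nearest $y_t$-value, yielding the explicit margin lower bound $(\gamma+t)/2 > t/2$ (the original opposite-class points remain safely below for small $t$, again using $c + b_0 > 0$). Thus the maximum margin is strictly larger when $(x_t,y_t)\notin D$; equivalently, since the non-member dataset is the member dataset with the essential support vector $x_t$ deleted, the margin can only grow, and the explicit separator shows the growth is strict. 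The adversary then reads off this margin from black-box distance-to-boundary queries at $x_p$, giving perfect membership inference. The main obstacle is the global-optimality bookkeeping: the closest-pair bound makes the member margin easy to control, but I must select a single $t$ small enough to simultaneously (i) let the perpendicular hyperplane separate all finitely many points, and (ii) keep $(x_t,x_p)$ the \emph{unique} closest opposite-class pair, so that deleting $x_t$ provably relaxes the binding constraint rather than merely tying it.
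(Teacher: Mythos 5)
Your proof is correct and follows essentially the same route as the paper's: place an oppositely-labeled poison a small distance from $x_t$ along the protruding direction $w$, use the closest-opposite-pair bound to pin the member-world margin at half that distance (which forces $x_t$ onto the margin of any optimal separator), and exhibit a hyperplane with strictly larger margin in the non-member world. The only nit is the constant in your smallness condition---you need $t \le c + b_0$ rather than $t/2 \le c + b_0$ for the midpoint plane to actually achieve margin $t/2$ against the original opposite-class points---but this is absorbed by ``choose $t$ small enough.''
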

\begin{proof}
Without loss of generality, assume $y_t=0$. Because $(x_t, y_t)$ is protruding, we know there exists some $w, b_0, b_1$ satisfying the conditions of the definition. Write $b$ such that $f(x)=w\cdot x + b$ has $f(x_t)=0$. Let $\delta$ be the distance from the plane $f(x)=0$ to the nearest point in $D / \lbrace (x_t, y_t)\rbrace$. We have $\delta>0$ because the plane $f$ lies strictly in between the planes $w\cdot x + b_0=0$ and $w\cdot x + b_1=0$, which both linearly separate $D / \lbrace (x_t, y_t)\rbrace$.

Then consider the poisoning $(x_p, y_p) = (x_t + w\tfrac{\delta}{2||w||}, 1)$. When $(x_t, y_t)\in D$, the maximum margin separator of $D\cup \lbrace (x_p, y_p)\rbrace$ is $w\cdot x + b+\tfrac{\delta}{4||w||}=0$. The distance from each point in $D / \lbrace (x_t, y_t)\rbrace$ to this plane must be at least $\tfrac{3\delta}{4}$, as this has shifted $f$ by a distance of $\delta/4$. Then $(x_t, y_t)$ will be a support vector of this plane, with a margin of $\delta/4$.

When $(x_t, y_t)\notin D$, the margin of the resulting hyperplane must be larger than $\delta/4$, as $f$ is a hyperplane which linearly separates $D\cup \lbrace (x_p, y_p)\rbrace$ with a margin of $\delta/2$.
\end{proof}

Our analysis here assumes that the adversary knows everything about the training set except for whether $(x_t, y_t)$ is a member, and that the dataset is linearly separable.

We also run a brief experiment to show that \emph{untargeted} white-box attacks on SVMs are also possible. Given white-box access to an SVM, the adversary can directly recover the data of the support vectors, as these are necessary to perform inference. Our untargeted attacks increase privacy leakage by forcing the trained model to use \emph{more} data points as support vectors. We train SVMs on Fashion MNIST restricted to the first two classes, using 2000 points for training and injecting 200 poisoning points according to a simple label flipping strategy. Over 5 trials, an unpoisoned linear SVM has an average of 121 support vectors, and an unpoisoned polynomial kernel SVM has an average of 176 support vectors. When adding the label flipping attack, the poisoned linear SVM grows to 512 support vectors from the clean training set, and the polynomial SVM grows to 642 support vectors from the clean training set, increasing the number of leaked data points by a factor of $4.2\times$ and $3.7\times$, respectively.

\end{document}